    %%%%%%%%%%%%%%%%%%%%%%%%%%%%%%%%%%%%%%%%%%%%%%%%%%%%%%%%%%%%%%%%%%%%%%%%%%%%%%%%
% 2345678901234567890123456789012345678901234567890123456789012345678901234567890
% 1         2         3         4         5         6         7         8

\documentclass[letterpaper, 10 pt, conference]{ieeeconf}  % Comment this line out if you need a4paper

\IEEEoverridecommandlockouts                              % This command is only needed if
% you want to use the \thanks command

\overrideIEEEmargins                                      % Needed to meet printer requirements.

% See the \addtolength command later in the file to balance the column lengths
% on the last page of the document

%fixes to the latex2e kernel
%\usepackage{fixltx2e} %this is not needed after 2015
\usepackage{fix-cm}
\usepackage{etex}

%fix double floats numbering and positioning
\usepackage{dblfloatfix}

%checks for obsolete packages
\usepackage{nag}

%%% Local Variables: 
%%% mode: latex
%%% End: 

%Note: pagination needs to be loaded after graphics, because mdframed
%needs to be loaded after xcolor to keeep the our options for the latter
%colors
\makeatletter
\@ifpackageloaded{xcolor}{}{%
\usepackage[table,x11names,dvipsnames,svgnames]{xcolor}%
}
\makeatother

%colors in table
\usepackage{colortbl}

%pdf
\usepackage{graphicx}
\usepackage{wrapfig}

% Lyft colors (see https://design.lyft.com/re-approaching-color-9e604ba22c88)
\definecolorset{RGB}{lyft}{}{Red,194,39,36;Sunset,202,53,33;Orange,205,68,20;Amber,200,117,42;Yellow,242,169,52;Citron,186,188,44;Lime,112,159,33;Green,56,139,31;Mint,45,118,56;Teal,52,133,135;Cyan,60,132,202;Blue,55,94,248;Indigo,64,13,247;Purple,115,42,248;Pink,176,25,145;Rose,176,32,75}

%%% Local Variables: 
%%% mode: latex
%%% End: 

\usepackage{cite}

%advanced typesetting
\usepackage{microtype}

%extensions for tables
\usepackage{array}
\usepackage{multirow}
\usepackage{booktabs}
\usepackage{makecell} %introduces \thead and \makecell

%compact paragraph title
\newcommand{\myparagraph}[1]{\textbf{\emph{#1}}.}
%\newcommand{\subparagraph}[1]{\emph{#1}.}

%provide options for changing spacing in enumeration environments
\ifcsname labelindent\endcsname

\fi
\usepackage[inline]{enumitem}

%provides subfloats (subcaption replaces subfig and subfigure, but
%might not be compatible with some classes)
\usepackage{subfig}

%set more relaxed constraints on the floats
\setcounter{topnumber}{2}
\setcounter{bottomnumber}{2}
\setcounter{totalnumber}{4}

%Make an enumeration with a letter+progressive number
\newenvironment{lenumerate}[2][]
{\begin{enumerate}[label=(#2\arabic*),leftmargin=0.2in,itemindent=0.15in,#1]}
{\end{enumerate}}

%Make an letter+progressive number description list

 %The following sets the labeling for inline enumerations

\setlist*[enumerate,1]{label={\itshape\arabic*)}}

%Define macro to make paragraph headings always end with a full stop
\makeatletter
\newcommand{\paragraphswithstop}{%
\let\copyparagraph\paragraph%
\renewcommand\paragraph[1]{\copyparagraph{##1.}}%
}
\makeatother

%Package to frame text in boxes
\usepackage[framemethod=tikz]{mdframed}

%%% Local Variables: 
%%% mode: latex
%%% End: 

%
% Allow easy definition of starred version of commands
% Ref: https://tex.stackexchange.com/questions/202504/macro-to-add-starred-version-of-command
\usepackage{suffix}

% Allow definition of environments with extra final code
\usepackage{environ}

% Insert a prefix-argument-postfix text only if argument is non-empty
% Needs to use a savebox to avoid evaluating the argument multiple times
\makeatletter
\newsavebox{\boxifnotempty}
\newcommand{\displayifnotempty}[3]{\sbox\boxifnotempty{#2}\setbox0=\hbox{\usebox{\boxifnotempty}\unskip}%
\ifdim\wd0=0pt
\else
 #1\usebox{\boxifnotempty}#3%
\fi%
}
\newcommand{\ifempty}[2]{\setbox0=\hbox{#1\unskip}%
\ifdim\wd0=0pt%
 #2%
\fi%
}
\newcommand{\ifnotempty}[2]{\setbox0=\hbox{#1\unskip}%
\ifdim\wd0>0pt%
 #2%
\fi%
}
\makeatother

%introduce the algorithmic environment and the algorithm floats
\usepackage{algpseudocode}
\usepackage{algorithm}

%macros for storing definitions across compilations
\usepackage{scrlfile}

\makeatletter
%mark a definition to be stored in the aux file
\newcommand*\newstoreddef[1]{
  \BeforeClosingMainAux{%
    \immediate\write\@auxout{%
      \string\restoredef{#1}{\csname #1\endcsname}%
    }%
  }%
}
%used by the aux file to restore the definition
\newcommand*{\restoredef}[2]{% used at the aux file
  \expandafter\gdef\csname stored@#1\endcsname{#2}%
}
%show the stored definition (user command to ask for the value)
\newcommand*{\storeddef}[1]{
  \@ifundefined{stored@#1}{0}{\csname stored@#1\endcsname}%
}
\makeatother

%Add values to non-counter definitions (works with non-integers)

%Insert content of a PGF variable 

%Package to get PDF page numbers
\usepackage{pageslts}
\pagenumbering{arabic}

%Output content of enviroment both to the document and to the log file
%In the log file, the content is marked by start/end delimiters, and
%macros are not expanded.
\NewEnviron{tee}{\BODY\typeout{Marker Tee [start] ^^J \BODY ^^JMaker Tee [end]}}

%%% Local Variables: 
%%% mode: latex
%%% End: 

\input{preamble/math}
%Spaces
\newcommand{\real}[1]{\mathbb{R}^{#1}{}}

%short-hand for matrices
\newcommand{\bmat}[1]{\begin{bmatrix}#1\end{bmatrix}}

%supertscript operators
\newcommand{\transpose}{^\mathrm{T}}

%parentheses-based operators

%equality

%Norms, absolute values, and inner products

\DeclarePairedDelimiter{\norm}{\lVert}{\rVert}

%Derivatives

%Vector
\newcommand{\vct}[1]{\mathbf{#1}}

%named operators

\DeclareMathOperator{\stack}{stack}

%\DeclareMathOperator*{\Pr}

%text for constrained optimization

%%% Local Variables: 
%%% mode: latex
%%% End: 

%memberships

%operators

%%% Local Variables: 
%%% mode: latex
%%% End: 

% This file was generated by the scriptgenerateNotation
% Do not modify this file directly

% Shortand notation for vectors and their derivatives

% Shortand notation for derivatives and bold of symbols

% Shortand notation for matrices

% Shortand notation for calligraphic upper case letters

\providecommand{\cC}{\mathcal{C}}

\providecommand{\cE}{\mathcal{E}}

\providecommand{\cG}{\mathcal{G}}

\providecommand{\cK}{\mathcal{K}}
\providecommand{\cL}{\mathcal{L}}

\providecommand{\cO}{\mathcal{O}}
\providecommand{\cP}{\mathcal{P}}

\providecommand{\cU}{\mathcal{U}}
\providecommand{\cV}{\mathcal{V}}

\providecommand{\cX}{\mathcal{X}}

% Shortand notation for some tilded symbols and their derivatives

%command for units of measure
\usepackage{units}

%S.I. units for some standard quantities

%macro to define other macros for block-colored labels
\newcommand{\newcolorlabel}[2]{%
  \expandafter\newcommand\csname #1\endcsname[1]{%
    \colorbox{#2}{\color{white}\textsf{\textbf{##1}}}}%
}

%macro to define other macros for comments 
%
\newcommand{\newcommenter}[2]{%
  \expandafter\newcommand\csname #1\endcsname[1]{%
    \fcolorbox{#2}{#2}{\color{white}\textsf{\textbf{#1}}}
    {\color{#2}##1}}%
  %comment to mention commenter
  \expandafter\newcommand\csname at#1\endcsname{%
    \fcolorbox{#2}{#2}{\color{white}\textsf{\textbf{@#1}}}
    {\color{#2}}}%
  % comment to highlight
  \expandafter\newcommand\csname #1hl\endcsname[2]{%
    \colorbox{#2}{\color{white}\textsf{\textbf{#1}}}\sethlcolor{Azure2}\hl{##2}~%
    \expandafter\ifx\csname commentarrow\endcsname\relax$\leftarrow$\else \commentarrow[#2]\fi~%
    {\color{#2}##1}}%
  % comment to strikeout
  \expandafter\newcommand\csname #1st\endcsname[2]{%
    \colorbox{#2}{\color{white}\textsf{\textbf{#1}}}\sout{##2}~%
    \expandafter\ifx\csname commentarrow\endcsname\relax$\leftarrow$\else \commentarrow[#2]\fi~%
    {\color{#2}##1}}%
}
% examples of the macro above
\newcommenter{TODO}{DodgerBlue1}
\newcommenter{rtron}{Green3}

%side review pointer

%introduce the comment environment
\usepackage{comment}

%enable pdf annotation
\usepackage{pdfcomment}

%enable highlights
\usepackage{soul}

%enable strikeout text with the command \sout{}
\usepackage[normalem]{ulem}

%package for displayed text
\usepackage{csquotes}

%markup

%%% Local Variables: 
%%% mode: latex
%%% End: 

%TikZ and common libraries
\usepackage{tikz}
\usetikzlibrary{calc}
\usetikzlibrary{matrix}
\usetikzlibrary{chains}
\usetikzlibrary{shapes.geometric}
\usetikzlibrary{arrows.meta}
\usetikzlibrary{decorations.pathreplacing}
\usetikzlibrary{backgrounds}

%Draw normalized vector between two coordinates

%Quotatures
\tikzset{
  dim above/.style={to path={\pgfextra{
        \pgfinterruptpath
        \draw[>=latex,|->|] let
        \p1=($(\tikztostart)!1.5em!90:(\tikztotarget)$),
        \p2=($(\tikztotarget)!1.5em!-90:(\tikztostart)$)
        in(\p1) -- (\p2) node[pos=.5,sloped,above]{#1};
        \endpgfinterruptpath
      }
    }
  },
  dim double above/.style={to path={\pgfextra{
        \pgfinterruptpath
        \draw[>=latex,|->|] let
        \p1=($(\tikztostart)!3em!90:(\tikztotarget)$),
        \p2=($(\tikztotarget)!3em!-90:(\tikztostart)$)
        in(\p1) -- (\p2) node[pos=.5,sloped,above]{#1};
        \endpgfinterruptpath
      }
    }
  },
  dim below/.style={to path={\pgfextra{
        \pgfinterruptpath
        \draw[>=latex,|->|] let 
        \p1=($(\tikztostart)!-1em!-90:(\tikztotarget)$),
        \p2=($(\tikztotarget)!-1em!90:(\tikztostart)$)
        in (\p1) -- (\p2) node[pos=.5,sloped,below]{#1};
        \endpgfinterruptpath
      }
    }
  },
}

%Right angle symbol
\tikzset{
    right angle quadrant/.code={
        \pgfmathsetmacro\quadranta{{1,1,-1,-1}[#1-1]}     % Arrays for selecting quadrant
        \pgfmathsetmacro\quadrantb{{1,-1,-1,1}[#1-1]}},
    right angle quadrant=1, % Make sure it is set, even if not called explicitly
    right angle length/.code={\def\rightanglelength{#1}},   % Length of symbol
    right angle length=2ex, % Make sure it is set...
    right angle symbol/.style n args={3}{
        insert path={
            let \p0 = ($(#1)!(#3)!(#2)$) in     % Intersection
                let \p1 = ($(\p0)!\quadranta*\rightanglelength!(#3)$), % Point on base line
                \p2 = ($(\p0)!\quadrantb*\rightanglelength!(#2)$) in % Point on perpendicular line
                let \p3 = ($(\p1)+(\p2)-(\p0)$) in  % Corner point of symbol
            (\p1) -- (\p3) -- (\p2)
        }
    }
}

%Horizontally fit an image between two coordinates

%Get angle between a line going through two points and the horizontal
%direction
\newcommand{\pgfextractangle}[3]{%
    \pgfmathanglebetweenpoints{\pgfpointanchor{#2}{center}}
                              {\pgfpointanchor{#3}{center}}
    \global\let#1\pgfmathresult  
}

%Arrow to be used to indicate something in the text
\usetikzlibrary{shapes.arrows}
\newcommand{\commentarrow}[1][Azure4]{\tikz[baseline=-3pt]{\node[shape border uses incircle, fill=#1,rotate=180,single arrow, inner sep=1pt, minimum size=6pt, single arrow head extend=2pt]{};}}

\tikzset{ax/.style={-latex,line width=2pt}}

\tikzset{camera/.style={fill=Sienna1,fill opacity=0.5},%
image plane/.style={draw=RoyalBlue3,line width=2pt}}

% shorthand notation for 2-D rotation written as function of theta

%%% Local Variables: 
%%% mode: latex
%%% End: 

\graphicspath{{./figures/}}

\title{\LARGE \bf
  Robust Path Planning and Control For Polygonal Environments via Linear Programming
}

% \rtron{Fix authors}
\author{Mahroo Bahreinian$^{1}$, Erfan Aasi$^{2}$ and Roberto Tron$^{3}$% <-this % stops a space
  \thanks{This work was supported by ONR MURI N00014-19-1-2571 ``Neuro-Autonomy: Neuroscience-Inspired Perception, Navigation, and Spatial Awareness''}% <-this % stops a space
  \thanks{$^{1}$Mahroo Bahreinian is with Division of Systems Engineering at Boston University, Boston, MA, 02215 USA. Email:
    {\tt\small mahroobh@bu.edu}}%
  \thanks{$^{2}$Erfan Aasi is with Department of Mechanical Engineering at Boston University, Boston, MA, 02215 USA. Email:
    {\tt\small eaasi@bu.edu}}%
  \thanks{$^{3}$Roberto Tron is with Faculty of Department of Mechanical Engineering at Boston University, Boston, MA, 02215 USA. Email:
    {\tt\small tron@bu.edu}}%
}

\begin{document}
\maketitle
\thispagestyle{empty}
\pagestyle{empty}

%%%%%%%%%%%%%%%%%%%%%%%%%%%%%%%%%%%%%%%%%%%%%%%%%%%%%%%%%%%%%%%%%%%%%%%%%%%%%%%%
\begin{abstract}

  We propose a novel approach for navigating in polygonal environments by synthesizing controllers that take as input relative displacement measurements with respect to a set of landmarks.
  Our algorithm is based on solving a sequence of robust min-max Linear Programming problems on the elements of a cell decomposition of the environment.
  The optimization problems are formulated using linear Control Lyapunov Function (CLF) and Control Barrier Function (CBF) constraints, to provide stability and safety guarantees, respectively.
  The inner maximization problem ensures that these constraints are met by all the points in each cell, while the outer minimization problem balances the different constraints in a robust way.
  We show that the min-max optimization problems can be solved efficiently by transforming it into regular linear programming via the dualization of the inner maximization problem.
  We test our algorithm to agents with first and second order integrator dynamics, although our approach is in principle applicable to any system with piecewise linear dynamics.
  Through our theoretical results and simulations, we show that the resulting controllers: are optimal (with respect to the criterion used in the formulation), are applicable to linear systems of any order, are robust to changes to the start location (since they do not rely on a single nominal path), and to significant deformations of the environment.
\end{abstract}
%%%%%%%%%%%%%%%%%%%%%%%%%%%%%%%%%%%%%%%%%%%%%%%%%%%%%%%%%%%%%%%%%%%%%%%%%%%%%%%%
\section{INTRODUCTION}
Path planning is a major research area in the context of mobile robots, as it deals with the problem of finding a path from an initial state toward a goal state while considering collision avoidance. Traditional path planning methods focus on finding \emph{single nominal paths} in a given \emph{known map}, and the majority of them makes the implicit assumption that the agent possesses a lower-level \emph{state feedback} controller for following such nominal path in the face of external disturbances and imperfect models. Biological system do not rely on the same restrictive assumptions; for instance, consider a person navigating in an unfamiliar room: despite the fact that the person does not have a precise blueprint of the floor, and does not know its precise location, they can reliably and robustly navigate toward a desired door, from any location in the room. While this ability in biological systems is the result of complex and not fully understood mechanisms, in this paper we aim to narrow the gap between planning algorithms and biological systems by 
synthesizing \emph{output-feedback controllers} that are robust to \emph{imprecise map knowledge}. By synthesizing controllers instead of specific paths, we more tightly integrate the high-level path planning and the low-level regulation tasks, allowing the agent to cope with disturbances without replanning; moreover, the focus on the controllers allows us to plan by directly using measurements (outputs) available to the agent, instead of assuming full state knowledge; finally, since the controllers depend on the environment indirectly (through measurements that are taken online), we empirically show that such controllers are robust to (often very significant) changes in the map. In order to pursue strong theoretical guarantees, in this paper we assume agents with controllable linear dynamics, and environments that admit a polygonal convex cell decomposition (e.g., via Delaunay triangulations \cite{fortune1992voronoi} or trapezoidal decompositions \cite{latombe2012robot}). Methods to address these limitations are planned as part of our future work (see also the Conclusions section).

\myparagraph{Previous work}
Existing works on path planning can be roughly classified into two categories: combinatorial path planning methods, and sample-based path planning methods \cite{comparative}.
Some of the path planning methods consider a continuous model for the environment and therefore provide a continuous path, such as potential fields \cite{khatib1986real}, \cite{krogh1984generalized} and navigation functions \cite{rimon1992exact}, while the other group solves the planning problem by abstracting the environment to a finite representation and find a discrete path, such as probabilistic roadmaps \cite{kavraki1996probabilistic} and cell decomposition methods \cite{lingelbach2004path}.

One of the well known combinatorial path planning algorithms is cell decomposition where a complex environment is decomposed into a set of cells, avoiding obstacles by planning straight paths in individual cells; for each individual step, traditional methods use midpoints \cite{lavalle2006planning, schurmann2009computational,choset2005principles}, while more recent solutions aim to optimize path length \cite{kloetzer2015optimizing}. Our work can be seen as a descendant of previous work that handles the cell decomposition vis-\'a-vis the continuous dynamic through a hybrid system perspective by synthesizing a state-feedback controller for each cell. Initial work proposed potential-based controllers \cite{conner2003composition}, while others characterize the theoretical conditions \cite{habets2006reachability}
%\rtron{Cite Habet et al., ``Reachability  and  Control  Synthesis  forPiecewise-Affine Hybrid Systems on Simplices'', Roszak et al., ``Necessary and Sufficient Conditions for Reachability on a Simplex''} 
and closed-form solutions \cite{belta2005discrete} for linear affine controllers. Although the latter approaches were extended to nonlinear systems in \cite{girard2008motion} and to uncertain maps \cite{yan2008mobile} (using intelligent re-planning), they all assume that each cell in the decomposition is a \emph{simplex} (a polytope in $\real{d}$ with $d+1$ vertices, e.g., a 2-D triangle). In contrast, our method can handle arbitrary convex polytopes, and design \emph{output}-feedback controllers (instead of state-feedback).
 
Sampling-based planning algorithms, such as rapidly exploring random tree (\texttt{RRT}), have become popular in last few years due to their good practical performance, and their probabilistic compleness \cite{lavalle2006planning,lavalle2001randomized,karaman2011sampling}. For trajectory planning that takes into account non-trivial dynamical systems of the robot, kinodynamic \texttt{RRT} \cite{lavalle2001randomized, lavalle2006planning} and closed-loop \texttt{RRT} (\texttt{CL-RRT}, \cite{kuwata2008motion}) and \texttt{CL-RRT\#} grow the tree  by sampling control inputs and then propagating forward the nonlinear dynamics (with the optional use of stabilizing controllers and tree rewiring to approach optimality). Further in this line of work, there has been a relatively smaller amount of works on algorithms that focus on producing controllers as opposed to simple reference trajectories. 

%\rtron{The following paragraph is hard to follow (see modifications to previous paragraphs for an idea of how it should be improved.)}

The \texttt{safeRRT} algorithm \cite{positiveInvariant,weiss2017motion} generates a closed-loop trajectory  from initial state to desired goal by expanding a tree of local state-feedback controllers to maximize the volume of corresponding positive invariant sets while satisfy the input and output constraints. %{The algorithm \cite{mcconley2000computationally} expands the region of stability based on a control Lyapunov function to various trim points of the system to construct a control law that guarantees a closed-loop stability}. 
Based on the same idea and following the \texttt{RRT} approach, the \texttt{LQR-tree} algorithm \cite{tedrake2009lqr} creates a tree by sampling over state space and stabilizes the tree with an linear quadratic regulator (LQR) feedback. With respect to the present paper, the common traits among all these works is the use of a full state feedback (as opposed to output feedback), although they do not require prior knowledge of convex cell decomposition of the environment.

Finally, our work builds upon real-time synthesis of point-wise controls that trade off safety and stability for nonlinear input-affine systems through a Quadratic Program (QP) formulation \cite{ames2014control,hsu2015control}. To the best of our knowledge, our paper is the first to use similar conditions for synthesizing controls over entire convex regions rather than single points.

\myparagraph{Proposed approach and contributions}
In this work, we propose a novel approach to synthesize a set of output-feedback controllers on a convex cell decomposition of a polygonal environment via Linear Programming (LP). We define constraints in terms of a Control Lyapunov Function (CLF) and Control Barrier Functions (CBF) to ensure, respectively, stability and safety (collision avoidance) throughout all the states in a cell, while automatically balancing the two aspects to maximize robustness. Our formulation results in a linear min-max optimization problem, which is solved by converting it to a LP form.

With respect to previous work:
\begin{enumerate*}
\item We allow a cell to be any generic convex polytope (instead of a simplex).
\item We consider output feedback based on any affine function of the state (under the natural assumption that the overall dynamics is controllable), although, for the sake of presenting a concrete application, we focus on controls using measurements of the relative position of the agent with respect to landmarks in the environment.
\item We apply the CLF-CBF to the new framework of control synthesis.
\end{enumerate*}
We believe that our solution can be extended to sample-based methods and non-linear systems, although these are beyond the scope of the current paper (see the Conclusions section for details).

%The remainder of this paper is organized as follows. In section \ref{notation} necessary definitions are presented. The problem formulation and our proposed method are stated in sections \ref{problem setup}. In section \ref{stationary} we study the requirements at a stabilization point. We show our algorithm on the first and second integrator systems in section \ref{numerical results}, and then finally in section \ref{conclusion} the concluding remarks and the future works of this research are stated.
%%%%%%%%%%%%%%%%%%%%%%%%%%%%%%%%%%%%%%%%%%%%%%%%%%%%%%%%%%%%%%%%%%%%%%%%%%%%%%%%%
\section{NOTATION AND PRELIMINARIES}\label{notation}

In this section we review CLF and CBF constraints in the context of our application on agents with linear dynamics and a convex cell decomposition of the environment.

\subsection{System dynamics}
We start by considering a control-affine dynamical system\footnote{The CLF-CBF concepts are applicable to input-affine systems, but in this work we assume linear time-invariant systems, and affine barrier functions.}
\begin{equation}\label{sys1}
  \dot{x}=Ax+Bu,
\end{equation}
where $x \in \cX$ denotes the state, $u \in \cU$ the system input, and $A\in\real{n_x\times n_x}$, $B\in\real{n_x\times n_u}$ define the linear dynamics, and $\cX\subset \real {n_x}$, $\cU\subset\real{n_u}$ denotes limits on the states, and actuators, respectively.% \rtron{Only in this section, change $\cX$ to $\cX_{\textrm{dyn}}$, and $A_x,b_x$ to $A_{x,\textrm{pos}}, b_{x,\textrm{pos}}$} 
 We assume $\cX_{\textrm{dyn}}$ and $\cU$ are polytopic, 
% \rtron{Move the definition of $\cX$, $\cU$ from the next section to here}
\begin{align}\label{cbf_ff}
  \cX_{\textrm{dyn}}=\{x\mid A_{x,\textrm{dyn}}x\leq b_{x,\textrm{dyn}}\},&&
  \cU=\{u\mid A_uu\leq b_u\},
\end{align}
where $A_\textrm{x,dyn}\!\in\! \real {s_d\times n_x}$, $A_{u} \!\in\! \real {s_u\times n_u}$, $b_{\textrm{x,dyn}} \!\in\! \real {n_x}$, $b_{u}\!\in\! \real{n_u}$, and that $0\in \cX_{\textrm{dyn}}$. $s_d$ and $s_u$ are the number of dynamic constraints and controller constraints respectively. 

%For system \eqref{sys1}, $x^*$ is equilibrium point if $\dot{x}=0$. An Equilibrium point is consider as stable if the system always returns to it after small perturbation.
% \rtron{Actually, in our formulation we need $f(x)$ and $g(x)$ to be affine in $x$ (or at most piecewise affine).}
\newcommand{\xpos}{x_{\textrm{pos}}}
\newcommand{\xdyn}{x_{\textrm{dyn}}}
\newcommand{\Ppos}{P_{\textrm{pos}}}
\newcommand{\Pdyn}{P_{\textrm{dyn}}}
Since the system \eqref{sys1} can be higher-order, but the environment constrains only positions, we give the following.
\begin{definition}\label{xposdyn}
  We assume that a subset of the state $x$ in \eqref{sys1} represents the position $\xpos=\Ppos x$ of the agent in the world, while $\xdyn=\Pdyn x$ represents the rest of the state (e.g., velocities in a second order system), where $\Ppos\in \real{d\times n_x}$ and $\Pdyn \in \real{(n_x-d)\times n_x}$ are orthogonal projection matrices.
\end{definition}
\begin{remark}
  
  In this section, we only define constraints for the dynamic part of the states, $\xdyn$, i.e., $\Ppos A_{\xdyn}^T=0$. The constraints on $\xpos$ will be derived from the environment.
\end{remark}
%\rtron{Add remark that here we define only constraints on the dynamic part of the state, $\xdyn$, i.e., $\Ppos A_{\xdyn}=0$. The constraints on $\xpos$ will be derived from the environment.}
%%%%%%%%%%%%%%%%%%%%%%%%%%%%%%%%%%%%%%%%%%%%%%%%%%%%%%%%%%%%%%%%%%%%%%%%%%%%%%%%%%
\subsection{High relative degree functions and transverse dynamics}
Given a function $h$ of the state of the dynamical system \eqref{sys1}, the following notions characterize the relation between the derivatives along the system's trajectories and the inputs $u$ of the system. Note that we assume that $h$ is sufficiently smooth so that all the necessary derivatives are well defined.
\begin{definition}
  The Lie derivative of a differentiable function $h$ for the dynamics \eqref{sys1} with respect to the vector field $Ax$ is defined as $\cL_{Ax}h(x)=\frac{\partial h(x(t))}{\partial x}\transpose Ax$. %\rtron{Later (in the optimization problem) you use $\cL_{Ax}$. Use $\cL_{Ax}$ throughout.}  
  The Lie derivative of order $r$ is denoted as $\cL_{Ax}^r$, and is recursively defined by $\cL_{Ax}^{r}h(x)=\cL_{Ax}(\cL_{Ax}^{r-1}h(x))$, with $\cL_{Ax}^1h(x)=\cL_{Ax}h(x)$ \cite{zcbf2}.
\end{definition}
\begin{definition}
  A function $h(x)$ is said to have relative degree $r$ with respect to the dynamics \eqref{sys1} if $\cL_B\cL_{Ax}^{i}h(x)= 0$ for $0 \leq i \leq r-1$ and $\cL_B\cL_{Ax}^{r}h(x)\neq 0$; equivalently, it is the minimum order of the time derivative of the system, $h^{r}(x)$, that explicitly depends on the inputs $u$. The Lie derivative of $h(x)$ with relative degree $r$ for dynamics \eqref{sys1} is defined as
  \begin{equation}
    h^r(x)=\cL_{Ax}^rh(x)+\cL_B\cL_{Ax}^{r-1}h(x)u
  \end{equation}
\end{definition}

\begin{definition}\label{def:transversal}
Given a function $h(x)$ with relative degree $r$ for the dynamics \eqref{sys1}, we define the \emph{transversal state} 
\begin{equation}\label{xi1}
  \xi_h(x)=\bmat{h(x)\\\dot{h}(x)\\\vdots\\h^{r-1}(x)}=\bmat{h(x)\\\cL_{Ax}h(x)\\\vdots\\\cL_{Ax}^{r-1}h(x)},
\end{equation}
and the \emph{transversal dynamics}
\begin{equation}\label{xi2}
\begin{aligned}
  &\dot{\xi}_h(x)=F\xi_h(x)+G\mu_h,\\
  &h(x)=C\xi_h
  \end{aligned}
\end{equation}
where $F\in\real{r\times r}$ , $G\in\real r$ and $C\in \real{1\times r}$ are defined as
\begin{align}\label{xi3}
  F=\begin{bmatrix}
    0 & 1 & \hdots & 0\\
    \vdots & \vdots & \ddots & \vdots \\
    0 & 0 & 0&1\\
    0 & 0 & 0& 0\\
  \end{bmatrix},&& G=\begin{bmatrix}0\\ \vdots \\0\\1
  \end{bmatrix} &&  C=[1 &0\;\hdots\;0],
\end{align}
and the virtual control input %$\mu_h=\cL_B\cL_{Ax}^{r-1}h(x)$ 
$\mu_h=h^r$
is a function of the actual input $u$.
\end{definition}
We use these concepts below to define higher-order CBFs and CLFs.
\begin{remark}
  When $h(x)$ is an affine function (see Sec.~\ref{sec_cbf}), all Lie derivatives are linear functions of $x$, and the system \eqref{xi3} can be interpreted as \eqref{sys1} in observable canonical form.
\end{remark}
%Note that Definition~\ref{def:transversal} is equivalent to performing input-output linearization when $h$ is interpreted as an output of the system.

\subsection{Safety Constraints by Control Barrier Function}
Suppose we have a sufficiently smooth function $h(x):\real {n} \rightarrow \real{}$ which defines a safe set $\cC$ such that
\begin{equation}\label{set_c}
  \begin{aligned}
    \cC&=\{x\in \real{n}|\;h(x)\geq0\},\\
    \partial \cC&=\{x\in \real{n}|\;h(x)=0\},\\
    {Int}(\cC)&=\{x\in \real{n}|\;h(x)>0\}.
  \end{aligned}
\end{equation}
We say that the set $\cC$ is \emph{forward invariant} (also said \emph{positive invariant} \cite{positiveInvariant}) if  $x(t_0) \in \cC$ implies $x(t)\in \cC$, for all $t\geq t_0$ where $x(t)$ is well defined~\cite{zcbf1}.

% In this work we will only consider CBFs of the form
% \begin{equation}\label{h}
%   h(x)=A_{h}x+b_h\in\real{n_h} 
% \end{equation}

%%%%%%%%%%%%%%%%%%%%%%%%%%%%%%%%%%%%%%%%%%%%%%%%%%%%%%%%%%%%%%%%%%%%%%%%%%%%
\begin{definition}[ECBF, \cite{nguyen2016exponential}]\label{def_L}
  Consider the control system \eqref{sys1}, and a continuously differentiable function $h(x)$ with relative degree $r\geq 0$ defining a set $\cC$ as in \eqref{set_c}. The function $h(x)$ is an \textit{Exponential Control Barrier Function} (ECBF) if there exist $c_h \in \real {r}$ and control inputs $u\in \cU$ such that
  \begin{equation}\label{ecbf_con}
    \cL_{Ax}^{r}h(x)+\cL_B\cL_{Ax}^{r-1}h(x)u+c_h\transpose\xi_h(x)\geq 0,\forall x \in Int(\cC).
  \end{equation}
\end{definition}
\begin{proposition}\label{prop:ECBF}
  Given an ECBF $h(x)$ and control inputs $u$ from Definition~\ref{def_L}, if $c_h$ stabilizes the transversal dynamics, i.e., the closed-loop matrix $A_h-B_hc_h\transpose$ is stable, then
  \begin{enumerate}
  \item $\cL_{A}^{j}h(x)\geq -p_{1}\cL_{A}^{j-1}h(x)$ for $1\leq j \leq r$ ,
  $p_{1}\geq 0 $
  \item the set $\cC$ is forward invariant.
  \end{enumerate}
\end{proposition}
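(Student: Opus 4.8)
The plan is to substitute the ECBF controller $u$ of Definition~\ref{def_L} into the transversal dynamics \eqref{xi2} and then exploit the integrator-chain structure of $F$ and $G$ through a backstepping/comparison-lemma argument. With that $u$ the virtual input is $\mu_h=\cL_{Ax}^rh(x)+\cL_B\cL_{Ax}^{r-1}h(x)u$, so the ECBF inequality \eqref{ecbf_con} says exactly that $\mu_h+c_h\transpose\xi_h(x)\geq 0$ on $Int(\cC)$. Since the last row of $F$ in \eqref{xi3} vanishes and $G$ has a single (last) nonzero entry, substituting gives the closed-loop transversal system
\begin{equation*}
  \dot\xi_h(x)=(F-Gc_h\transpose)\xi_h(x)+G\,b(x),\qquad b(x):=\mu_h+c_h\transpose\xi_h(x)\geq 0,
\end{equation*}
i.e., a Brunovsky chain driven only through its last coordinate by the nonnegative signal $b$, with closed-loop matrix $F-Gc_h\transpose$ (the matrix written $A_h-B_hc_h\transpose$ in the statement) in companion form, Hurwitz by hypothesis.

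Next, since $c_h$ is a free design parameter, I would pick it so that $F-Gc_h\transpose$ has the single real negative pole $-p_1$ with multiplicity $r$; then $c_h$ collects the binomial coefficients of $(s+p_1)^r$, and one checks by direct expansion that the chain of ``filtered'' functions $\nu_0:=h$, $\nu_k:=\dot\nu_{k-1}+p_1\nu_{k-1}=(\tfrac{\de}{\de t}+p_1)^kh$ for $k=1,\dots,r$ satisfies $\nu_r=\mu_h+c_h\transpose\xi_h=b$. Hence $\nu_r(x(t))\geq 0$ whenever $x(t)\in Int(\cC)$. I would then run a descending induction with the comparison lemma: from $\nu_k=\dot\nu_{k-1}+p_1\nu_{k-1}\geq 0$ we get $\dot\nu_{k-1}\geq-p_1\nu_{k-1}$, so Gr\"onwall's inequality gives $\nu_{k-1}(x(t))\geq\nu_{k-1}(x(t_0))e^{-p_1(t-t_0)}\geq 0$ as long as $\nu_{k-1}(x(t_0))\geq 0$. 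Peeling off $k=r,r-1,\dots,1$ yields $\nu_k(x(t))\geq 0$ for all $k$ and all $t$, which is statement~1 (the case $j=1$ reads literally $\cL_{Ax}h\geq-p_1h$; for $j\geq 2$ the inequality holds for the filtered functions $\nu_j$). Taking $k=1$ gives $\dot h\geq-p_1h$ along the trajectory, whence $h(x(t))\geq h(x(t_0))e^{-p_1(t-t_0)}\geq 0$ and $\cC$ is forward invariant, which is statement~2.

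The main obstacle — and the place where care is needed — is the bookkeeping of initial conditions in the descending induction: the comparison step at level $k-1$ requires $\nu_{k-1}(x(t_0))\geq 0$, so the argument in fact proves invariance of the nested intersection $\bigcap_{k=0}^{r-1}\{x:\nu_k(x)\geq 0\}$ rather than of $\{h\geq 0\}$ by itself (for $r\geq 2$ the latter is genuinely not invariant under \eqref{ecbf_con} alone). This is the standard ECBF convention, so one would either state the result for that intersection, interpret $\cC$ accordingly, or add the hypothesis that $\xi_h(x(t_0))$ lies in it. A secondary, purely routine point is that choosing a repeated pole makes the $\nu_k$ involve binomial coefficients; allowing distinct poles $p_1,\dots,p_r$ with $\nu_k=\dot\nu_{k-1}+p_k\nu_{k-1}$ changes nothing in the induction and would replace the single $p_1$ in statement~1 by $p_k$. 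Existence and uniqueness of trajectories (needed for the comparison lemma) and differentiability of the $\nu_k$ are inherited from the smoothness of $h$ and Lipschitzness of $u$ assumed throughout.
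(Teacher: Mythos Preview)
Your approach is essentially the same as the paper's: both use the cascaded outputs $\nu_k=\dot\nu_{k-1}+p_k\nu_{k-1}$ (the paper writes them as $y_i$ and cites \cite{nguyen2016exponential} for the construction), note that the top one coincides with the ECBF slack $\mu_h+c_h\transpose\xi_h\geq 0$, and then peel down with Gr\"onwall's inequality. The paper is terser---it defers claim~2 entirely to \cite[Theorem~1]{nguyen2016exponential} and only writes out the $j=1$ step of claim~1---whereas you carry out the full descending induction and, correctly, flag that the forward-invariant set is really the nested intersection $\bigcap_k\{\nu_k\geq 0\}$ and that claim~1 for $j\geq 2$ literally holds for the filtered $\nu_j$ rather than the raw Lie derivatives; these are standard ECBF caveats that the paper glosses over.
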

\begin{proof}
  From \cite[equation (41) and Theorem 2]{nguyen2016exponential}, for the family of outputs $y_i:\real{n}\rightarrow\real{}$ for $i=1,\hdots,r$ we have 
  \begin{equation}
      y_i=\dot{y}_{i-1}+p_iy_{i-1}
  \end{equation}
  where $p_i\in \real{+}$ for $i=1,\hdots,r$ is a pole location of $\mu_h = -c_h^T\xi_h$, $C_i=\{x\in \real{n}|y_i\geq 0\}$ ,and $y_0=\cL_{A}^{0}h(x)$ so 
  \begin{equation}
      y_1={\cL_{A}^{0}h(x)}+p_1\cL_{A}^{0}h(x)\geq 0 
  \end{equation}
  which implies $\cL_{A}^{j}h(x)\geq -p_1\cL_{A}^{j-1}h(x)$.
  
  For proof of claim 2 see  \cite[Theorem 1]{nguyen2016exponential}.
\end{proof}
 Note that if $r=1$, $h(x)$ is also a special case of a \textit{Zeroing Control Barrier Function} (ZCBF,\cite{zcbf1,zcbf2}). % if there
% exits a locally Lipschitz extended class-$K$ function $\alpha$ and a
% set $\cC$ such that
% \begin{equation}
%   \cL_Ah(x)+\cL_Bh(x)u+\alpha(h(x))\geq 0,\forall x \in Int(\cC),
% \end{equation}

% Note that in this work, we will assume that $h(x)$ is a linear function.
%%%%%%%%%%%%%%%%%%%%%%%%%%%%%%%%%%%%%%%%%%%%%%%%%%%%%%%%%%%%%%%%%%%%%%%
\subsection{Stability Constraints by Control Lyapunov Function }\label{CLF}
In this section we present an analogous definition extending CLFs \cite{ames2014rapidly} to higher-order relative degrees.
\begin{definition}\label{def:ECLF}
  Consider the control system \eqref{sys1}, and a continuously differentiable function $V(x)$ defined over a set $\cX$ with $V(x)\geq 0$ and relative degree $r\geq 0$. The function $V(x)$ is a \textit{Exponential Control Lyapunov Function} (ECLF) % if there exist constants $\varepsilon_1,\varepsilon_2 \geq 0$ such that,
  % \begin{equation}\label{L1}
  %   \varepsilon_1\alpha(\norm{x})\leq V(x)\leq \varepsilon_2\alpha({\norm{x}}),
  % \end{equation}
%  where $\alpha$ is a class-$K$ function, and
  if there exists $c_V\in\real{r}$ and control inputs $u\in \cU$ such that
  \begin{equation}\label{L2}
    \cL_A^{r}V(x)+\cL_B\cL_A^{r-1}V(x)u+c\transpose_V\xi_V(x)\leq 0,\forall x \in \cX.
  \end{equation}
\end{definition}
For $r=1$, we recover the definition of Exponentially Stabilizing CLFs (ES-CLFs, \cite{ames2014rapidly}).
It is possible to use the ECLF to design controllers that exponentially stabilize the original dynamics \eqref{sys1}, as shown by the following:

\begin{proposition}\label{prop:ECLF}
  Given an ECLF $V(x)$ and controls $u$ from Definition~\ref{def:ECLF}, if $\cX$ is a forward-invariant set, and $c\transpose_V$ stabilizes the transversal dynamics, i.e., the matrix $F-Gc\transpose_V$ is stable, then:
  \begin{enumerate}
  \item\label{it:der} $\cL_{A}^{j}V(x)\leq -q_1\cL_{A}^{j-1}V(x)$ for $1\leq j \leq r$ ,
  $q_{1}\geq 0 $
  \item\label{it:Vlimit} $\lim_{t\to\infty} V(x(t))=0$ with exponential convergence;
  \item\label{it:xlimit} If $V(x)$ in addition satisfies
  \begin{equation}\label{L1}
    \alpha_1(\norm{x})\leq V(x)\leq\alpha_2(\norm{x})
  \end{equation}
  where $\alpha_1,\alpha_2$ are class-$\cK$ functions, then $\lim_{t\to\infty}x=0$ with exponential convergence.% \rtron{It is $0$ and not $x_0$, otherwise we would need to use $\norm{x-x_0}$ in the previous equation} % \textcolor{red}{$x_0 ?$}
\end{enumerate}
\end{proposition}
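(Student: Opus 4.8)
The plan is to establish the three claims in order, each mirroring the corresponding step of Proposition~\ref{prop:ECBF} with the inequality signs reversed, since the ECLF condition \eqref{L2} is the ``$\le 0$'' analogue of the ECBF condition \eqref{ecbf_con}.

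For claim~\ref{it:der}, since $F-Gc_V\transpose$ is Hurwitz and the pair $(F,G)$ in \eqref{xi3} is in controllable canonical form, I would choose $c_V$ so that $F-Gc_V\transpose$ has real negative eigenvalues $-q_1,\dots,-q_r$ with $q_i>0$ (concretely, $c_V$ collects the coefficients of $\prod_{i=1}^r(\lambda+q_i)$). Introducing the cascade of auxiliary outputs $\eta_0=V(x)$ and $\eta_i=\dot\eta_{i-1}+q_i\eta_{i-1}$ for $i=1,\dots,r$, a direct expansion --- using $V^{(j)}=\cL_A^jV$ for $j<r$ and $V^{(r)}=\cL_A^rV+\cL_B\cL_A^{r-1}V\,u$ --- shows that $\eta_r=\cL_A^rV(x)+\cL_B\cL_A^{r-1}V(x)\,u+c_V\transpose\xi_V(x)$, so \eqref{L2} is precisely $\eta_r\le 0$ on $\cX$. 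Propagating the comparison lemma down the cascade, exactly as in \cite[Thm.~2]{nguyen2016exponential} but with signs reversed, and using that $\cX$ is forward invariant so that the trajectory remains where \eqref{L2} holds, yields $\eta_i\le 0$ for all $i$; in particular $\eta_1=\dot V(x)+q_1V(x)\le 0$, which is claim~\ref{it:der} for $j=1$, and the higher-$j$ cases follow from the same recursion applied to $\cL_A^{j-1}V$ in place of $V$.

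For claim~\ref{it:Vlimit}, evaluating claim~\ref{it:der} with $j=1$ along a trajectory and again invoking forward invariance of $\cX$ gives $\tfrac{d}{dt}V(x(t))\le -q_1V(x(t))$ for all $t\ge t_0$; since $V\ge 0$, the comparison lemma (Gr\"onwall) yields $0\le V(x(t))\le V(x(t_0))\,e^{-q_1(t-t_0)}$, so $V(x(t))\to 0$ exponentially with rate $q_1>0$. For claim~\ref{it:xlimit}, combining this bound with \eqref{L1} gives $\alpha_1(\norm{x(t)})\le V(x(t))\le\alpha_2(\norm{x(t_0)})\,e^{-q_1(t-t_0)}$, hence $\norm{x(t)}\le\alpha_1^{-1}\bigl(\alpha_2(\norm{x(t_0)})\,e^{-q_1(t-t_0)}\bigr)\to 0$ by continuity of $\alpha_1^{-1}$ at $0$; in the customary case $\alpha_i(s)=k_is^2$ this sharpens to the explicit estimate $\norm{x(t)}\le\sqrt{k_2/k_1}\,\norm{x(t_0)}\,e^{-q_1(t-t_0)/2}$, recovering the ES-CLF conclusion of \cite{ames2014rapidly}.

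The only non-routine step is claim~\ref{it:der}: reducing the coupled $r$-th order inequality \eqref{L2} to the single-pole inequalities requires the nested comparison-lemma argument of \cite{nguyen2016exponential}, and one must verify that $c_V$ can be taken with real eigenvalues (so the $\eta_i$ cascade is real and well defined), that the trajectory stays in $\cX$ so \eqref{L2} is available along it (this is exactly why forward invariance of $\cX$ is assumed), and that the appropriate nested initial sublevel conditions on the transversal state hold so the induction can be started --- the same implicit hypotheses that underlie the ECBF counterpart. Once claim~\ref{it:der} is in hand, claims~\ref{it:Vlimit} and~\ref{it:xlimit} are immediate from the comparison lemma and the class-$\cK$ sandwich \eqref{L1}.
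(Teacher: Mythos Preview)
Your proposal is correct and follows essentially the same approach as the paper: both mirror the ECBF argument of \cite{nguyen2016exponential} with reversed inequalities, placing the poles $-q_i$ of $F-Gc_V\transpose$, using the cascaded comparison argument to obtain claim~\ref{it:der}, then Gr\"onwall for claim~\ref{it:Vlimit}, and the class-$\cK$ sandwich \eqref{L1} for claim~\ref{it:xlimit}. Your write-up is in fact more explicit than the paper's, which simply points to Proposition~\ref{prop:ECBF} and \cite{nguyen2016exponential} for claim~\ref{it:der} and derives claim~\ref{it:Vlimit} from the vector inequality $\dot\xi_V\le(F-Gc_V\transpose)\xi_V$ rather than the scalar $\dot V\le -q_1V$, but these are the same argument at different levels of granularity.
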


\begin{proof}
The proof mirrors a simplified version of the ideas in \cite{nguyen2016exponential}. Setting the virtual input $\mu_V$ in the transversal
dynamics to $\mu_V=-c\transpose_V\xi_V$,
   we have that $\lim_{t\to\infty} \xi_V=0$ with exponential convergence (since it is an LTI system and $c_V$ contains stabilizing feedback gains). We can apply $\mu_V\leq -c\transpose_V\xi_V$, then
  \begin{equation}
    \dot\xi_V\leq (F-Gc\transpose_V)\xi_V,
  \end{equation}
  %\rtronhl{say that the proof is similar to claim 1 of Proposition 1}
  {in which the last element correspond to the condition in \eqref{L2}. Subclaim~\ref{it:der}} can be proved similar to the claim 1 in Proposition \ref{prop:ECBF} where $q_i\in \real{+}$ for $i=1,\hdots,r$ is a pole location of $\mu_V = -c_V^T\xi_V$. Applying Gronwall's comparison lemma \cite{gronwall1919note}, we then conclude that $\lim_{t\to\infty} \xi_V=0$, which, in particular, implies subclaim~\ref{it:Vlimit}. Finally, subclaim~\ref{it:xlimit} can be shown using ~\ref{it:Vlimit} in combination with \eqref{L1} and standard arguments from Lyapunov theory \cite[Chapter 4] {khalil2002nonlinear}. %\rtron{Cite Khalil's book, chapter about Lyapunov stability}
\end{proof}

Note that this result can be applied to any point other than the origin with a simple change of coordinates.
\subsection{Polygonal environment decomposition}
We assume a compact polygonal environment $\cP\subset \real{n_x}$, not necessarily simply-connected, decomposed in a finite number of convex cells $\{\cX_{i,\textrm{pos}}\}$, such that $\bigcup_{i} \cX_{i,\textrm{pos}}=P$, and set $\cX_{i,\textrm{pos}}$ is a polytope defined by linear inequality constraints of the form $A_{\xpos,i}\transpose x\leq b_{\xpos,i}$.% \rtron{Note that since these constraints are on the position only.}

Our goal is to design a different linear feedback controller $u$ for each cell $\cX_{i}$. The feedback signal used by the controller will be based on linear relative measurements with respect to a set of \emph{landmarks}.
\begin{definition}
  A landmark is defined as a point $\hat{y}\in \real{d}$ whose location is known and fixed in the environment.
\end{definition}
For each convex section $\cX_{i}$, we have a finite number of landmarks. In this paper, we choose the landmarks as the vertices of the convex section $\cX_{i}$, although this choice does not make any difference in terms of the actual method.

\subsection{High-level planning}\label{planning}
We consider two overall objectives for the controller design:
\begin{lenumerate}{O}
\item \label{it:point-stabilization} Point stabilization: given the stabilization point (where $\dot{x}=0$) in the environment and starting from any point, we aim to converge to the stabilization point (e.g. Fig.~\ref{1-1}).
\item\label{it:patrolling} Patrolling: starting from any point, we aim to patrol the environment by converging to a path, and then traversing the same path (e.g. Fig.~ \ref{osc1}).
\end{lenumerate}

To specify the convergence objective for each controller $u$, we first abstract the polygonal environment $P$
into a graph $\cG=(\cV,\cE)$, where each vertex $i \in \cV$ represents a cell $\cX_i$ in the partition of $P$, and an edge $(i,j)\in \cE$ if and only if cells corresponding to $i$ and $j$ have a face in common.

  In the case of the point stabilization objective \ref{it:point-stabilization}, the stabilization point is one of the vertices of the graph and if the stabilization point is in the middle of the cell, without loss of generality, we can decompose the cell into  new convex cells such that the stabilization point is one of the vertices of the new cells. Then, we add one vertex to the set $\cV$, which will be the stabilization point and also, we add edges between the new vertex and any cell that has a face in common with the cell includes the stabilization point to the set $\cE$.

% \rtron{TODO: add one additional vertex to V, explain additional edges}

For each cell, we then select one \emph{exit edge} (a pointer) such that, when considered together, all such edges provide a solution in the abstract graph $\cG$ to the high level objective. For instance, in the case of objective \ref{it:point-stabilization}, the exit edge of each cell will point in the direction of the shortest path toward the vertex of the stabilization point. In the case of objective \ref{it:patrolling}, following the exit edges will lead to a cyclic path in the graph.

To give an example, the polygonal environment in Fig.~\ref{g1} is converted to the connected graph in Fig.~\ref{g3} based on the cell decomposition of the environment in Fig.~\ref{g2}. Starting from the first node in Fig.~\ref{g3} which is shown by the green point, we find the path from the start node to the equilibrium node shown by the red point, through the path planning algorithms (e.g. using Dijkstra's algorithm). Regarding to that path, we define the \textit{exit face} as the face of the convex section the path moves through and based on that we design the controller.
\begin{figure}[t!]
  \centering
  \subfloat[]{\label{g1}{\includegraphics[width=2.5cm]{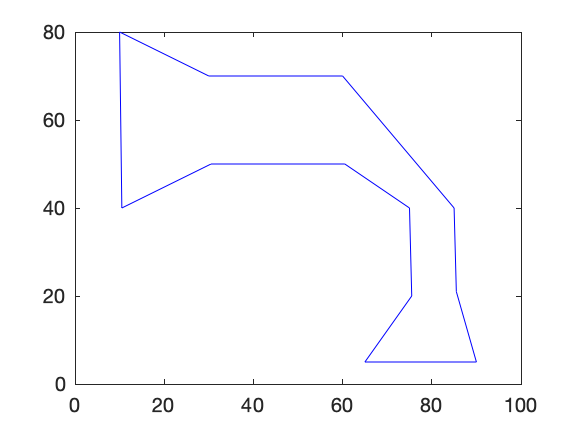} }}%
  \subfloat[]{{\label{g2}\includegraphics[width=2.5cm]{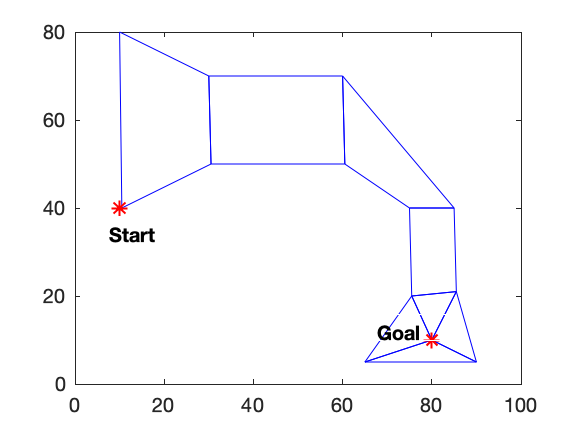} }}%
  \subfloat[]{{\label{g3}\includegraphics[width=2.6cm]{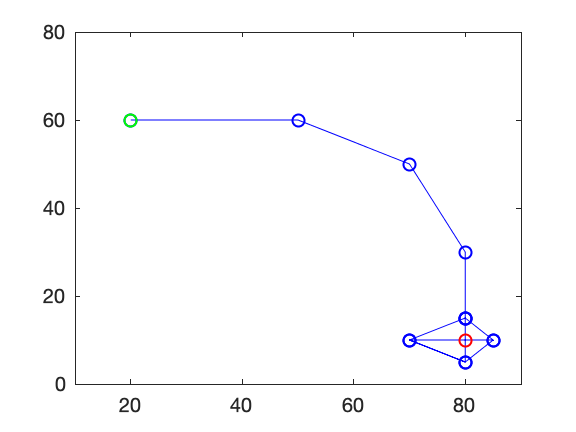} }}%
  \caption{The polygonal environment in Fig~.\ref{g1} is decomposed to 8 convex sections Fig~.\ref{g2} and the corresponding graph is shown in Fig~. \ref{g3}}
  \label{fig_env}
\end{figure}
\begin{definition}\label{exit_dir}
  For each cell $\cX_i$ in the decomposition of the environment, we define an \emph{exit face} $\cP_{exit}$ or, respectively, \emph{stabilization point} $\cP_{exit}=\{x_g\}$ to be the face or, respectively, vertex corresponding to the \emph{exit edge} in the abstract graph $\cG$. The \emph{exit direction} $z$ is an inward-facing normal or, respectively, direction of $\cP_{exit}$.
\end{definition}

In this work we desire to design a controller for each convex section of the environment that drives the system in the exit direction toward the exit face or the stabilization point, while avoiding the boundary of the environment.

Overall, thanks to the high level planning in the abstract graph $\cG$, and the controller design in each cell $\cX_i$ (explained in the sections below) the system will traverse a sequence of cells to reach a given equilibrium point, or achieve a periodic steady state behavior (examples in Section~\ref{numerical results}) according to the desired objective.

%In the following sections \ref{problem setup} we propose our method and validate it through some examples in section \ref{numerical results}.

%%%%%%%%%%%%%%%%%%%%%%%%%%%%%%%%%%%%%%%%%%%%%%%%%%%%%%%%%%%%%%%%%%%%%%%%%%%%%%%%%
\section{PROBLEM SETUP}\label{problem setup}

%\rtron{Define $\cX=\cX_{\textrm{dyn}}\cap \cX_i$}
The goal of this section is to synthesize a robust controller for a convex cell $\cX$ (with respect to previous sections, we dropped the subscript $i$ to simplify the notation) where  $\cX=\cX_{\textrm{dyn}}\cap \cX_{\textrm{pos}}$.
According to Definition \ref{xposdyn} we divide $x$ into two parts, $\xpos$ and $\xdyn$. We assume that the agent has direct access to $\xdyn$, but for $\xpos$ the agent can only measure the the relative displacements between the robot's position $\xpos$ and the landmarks in the environment, which corresponds to the output function
\begin{equation}
  y=(Y-\xpos\vct{1}\transpose)^\vee,
\end{equation}
% {where $Y\in\real{d\times n_l}$ is a matrix of landmark locations, and  $A^\vee$ represents the vectorized version of a matrix $A$.}
where $Y\in\real{d\times n_l}$ is a matrix of landmark locations and  $A^\vee$ represents the vectorized version of a matrix $A$. Our goal is to find a feedback controller that, given $y$, provides an input $u$ that drives the system toward an exit face or vertex of $\cX$ while avoiding obstacles (non-exit faces of $\cX$). Note that the landmarks do not necessarily need to belong to $\cX$. We assume \eqref{sys1} is controllable and choose a controller of the form
\begin{equation}\label{u}
  u(K_1,K_2)=K_1y +K_2\xdyn,
\end{equation}
where $K_1 \in \real{n_u\times dn_l}$ and $K_2 \in \real{n_u\times(n_x-d)}$ are feedback gains that need to be designed. Note that $Y-\xpos\vct{1}\transpose$ gives a matrix where each column is the relative displacement between each landmark and the current position of the system; as such, we are looking for a controller that feeds back linear combinations of these displacements.
From the distributivity property of vectorization we can write $u$ as
\begin{equation}\label{UwithX}
\begin{aligned}
  u(K_1,K_2)&=K_1Y^\vee-K_1(\vct{1}_{nl} \otimes I_d)\xpos+K_2\xdyn\\&=K_1Y^\vee+K_xx,
  \end{aligned}
\end{equation}
where $K_x\in \real{n_u\times n_x}$ and $K_x=\begin{bmatrix}-K_1(\vct{1}_{nl} \otimes I_d)& K_2\end{bmatrix}$.
\begin{remark}
  % \rtron{In general, our framework can handle general linear output $y=Cx+d$, but we focus here on the path planning application}
  In general, our framework can handle general linear output $y=Cx+D$, but we focus here on the path planning application.
\end{remark}
\subsection{Control Barrier Function}\label{sec_cbf}
% \rtron{This section reduces to stating $A_h=A_x$ and $b_h=b_x$}
Let $A_{h,i}\in\real{1\times n_x}$ belongs to the union of all rows of $A_{\xdyn}$ and $A_{\xpos}$ except the one row defining the exit face, and $b_{h,i}\in\real{}$ we define the following candidate ECBF:
\begin{equation}\label{cbf_f}
  h_i(x)={A_{h,i}}x+b_{h,i}
\end{equation}
%\rtron{Explain that $A_{h,i}$ is taken from $A_{\xpos}$ (minus one row for the exit face), and $A_{\xdyn}$ (all the rows)}
where $i=\{1,\hdots,s_p+s_d\}$ such that $s_p$ denotes the number of faces of $\cX$ except the one associated to an exit face (or all of them in the case of a stabilization point) and $s_d$ donates the number of number of boundaries to limit $\xdyn$. %i.e., for $i=\{1,\hdots,n_p\}$, $\Pdyn A_{h,i}\transpose=\vct{0}_{n_x-d}$ and for $i=\{n_p+1,\hdots,n_p+n_d\}$, $\Ppos A_{h,i}\transpose=\vct{0}_d$.
We define $s_x=s_p+s_d$.
%%%%%%%%%%%%%%%%%%%%%%%%%%%%%%%%%%%%%%%%%%%%%%%%%%%%%%%%%%%%
\subsection{Control Lyapunov Function} \label{sec_clf}
To stabilize the system, we define the Lyapunov function $V(x)$ for cell $\cX$ as,
\begin{equation}\label{clf_f}
  V(x)=z^T x+b_V, \;\;
\end{equation}
where $\Ppos z\in \real d$ is the exit direction for the cell $\cX$ (see Definition~\ref{exit_dir}), and $\Pdyn z=0$, and $b_V \in \real{}$ is chosen such that the function reaches its minimum $V(x)=0$ when $x$ is in the exit face ($V(x)<0$ correspond to points outside the cell). Note that this Lyapunov function represents, up to a constant, the distance $d(\xpos,\cP_{exit})$ between the current system position and the exit face. When the exit face reduces to an exit point $\cP_{exit}=x_{exit}$, the Lyapunov function states the distance $d(\xpos,x_{exit})$ between the current position and exit point, up to a constant, and the minimum of $V(x)=0$ reaches when $x$ is identical to the exit point $x_{exit}$. 

%\rtron{Explain explicitly the case where the exit face reduces to a point.}
\begin{remark}
  The function $V(x)$ can be defined as a function of the vertices of the exit face instead of its normal.
  %\rtron{See if you can rephrase this more compactly (using a sentence structure with ``respectively'')}
 For instance, in $\real{2}$, we have
   \begin{equation}
    V(x)=\det(\bmat{v_1-v_0 & \xpos})
  \end{equation}
  where $v_0,v_1$ are two distinct points (e.g., vertices) in the exit face (with their order determining the correct sign in $V(x)$). Based on the same idea, in $\real{3}$, $  V(x)=\det(\bmat{v_1-v_0 &v_2-v_0& \xpos})$
  where $v_0,v_1,v_2$ are three distinct points in the exit face (e.g., three vertices of the exit plane) and respectively . This concept can be generalized to any dimension.
\end{remark}
%\rtron{similar equation as above, but with $v_1-v_0$ and $v_2-v_0$}
\subsection{Finding the Controller by Robust Optimization}
Our goal is to find controllers $u$ (more precisely, control gains $K$) that maximize the motion of the robot toward the exit face, while avoiding the boundary of the environment.
Using the CLF-CBF constraints reviewed in Section~\ref{notation}, we encode our goal in the following feasibility problem:

\begin{equation}\label{opt-feasibility}
  \begin{aligned}
    \textrm{find} \;\;{K}\\
    \textrm{s.t.}:&-( \cL_{Ax}^{r}h_i(x)+\cL_B\cL_{Ax}^{r-1}h_i(x)u+c\transpose_b\xi_{bi}(x))\leq 0,\\
    &\cL_{Ax}^{r}V(x)+\cL_B\cL_{Ax}^{r-1}V(x)u+c\transpose_V\xi_V(x)\leq 0,\\
    &u\in\cU,\\
    % &\max_x \quad A_uu\leq b_u\\
    & \forall x\in \cX,\;\;i=\{1,\hdots,s_h\}.
  \end{aligned}
\end{equation}

In practice, we aim to find a controller that satisfies the constraints in \eqref{opt-feasibility} with some margin, hence we focus on the following robust optimization problem:

\begin{equation}\label{opt_robust}
  \begin{aligned}
    \min_{K,S_l,S_b}\;&w_b\transpose S_b+w_lS_l\\
    \textrm{s.t.}: &-(\cL_{Ax}^{r}h_i(x)+\cL_B\cL_{Ax}^{r-1}h_i(x)u+c\transpose_b\xi_{bi}(x))\leq S_{bi},\\
    & \;\;\;\;\cL_{Ax}^{r}V(x)+\cL_B\cL_{Ax}^{r-1}V(x)u+c\transpose_V\xi_V(x)\leq S_l,\\
    % &\max_x \quad A_uu\leq b_u\\
    &  \;\;\;\;S_l,S_b \leq 0, u\in\cU,\\
    &  \;\;\;\;\forall x\in\cX,\;\;i=\{1,\hdots,s_h\}.
  \end{aligned}
\end{equation}
Note that the constraints in \eqref{opt_robust} need to be satisfied for all $x$ in the cell $\cX$, i.e., the same control gains should satisfy the CLF-CBF constraints at every point in the cell. We handle this type of constraint by rewriting \eqref{opt_robust} using a min-max formulation, where \eqref{opt_robust} is equivalent to,
\begin{equation}\label{opt_org}
  \begin{aligned}
    &\min_{K,S_l,S_b}\;w_b \transpose S_b+w_l S_l\\
    &\textrm{s.t.}:\\
    &\begin{bmatrix} \max_x -(\cL_{Ax}^{r}h_i(x)+\cL_B\cL_{Ax}^{r-1}h_i(x)u+c\transpose_bh_i(x))\\
      s.t \;\;\; x\in \cX,\;u \in \cU
    \end{bmatrix}
    \leq S_{bi},\\
    &\begin{bmatrix} \max_{x} \cL_{Ax}^{r}V(x)+\cL_B\cL_{Ax}^{r-1}V(x)u+c\transpose_VV(x)\\
      s.t \;\;\; x\in \cX,\;u \in \cU
    \end{bmatrix}
    \leq S_l,\\
    % & \begin{bmatrix}
    %   \max_x A_uu\\
    %   \textrm{s.t.}:\;\;x_1\in X,\;x_2\in U
    % \end{bmatrix}\leq b_u\\
    &S_l,S_b \leq 0, \;\;i=\{1,\hdots,s_h\}.
  \end{aligned}
\end{equation}

% \rtron{these need to be $c\transpose$ instead of $\cC$}
% where $\mathcal{C}_b\in \real{f}$ and $\mathcal{C}_l \in \real{}$ are constants, %\rtron{weights should be defined in (24)}
the weights $w_b \in \real{s_h}$ and $w_l\in \real{}$ are user-defined constants defining the trade-off between  the barrier functions and Lyapunov function constraints: larger $w_b$ implies solutions moving away from the walls, while larger $w_l$ implies solutions moving faster toward the exit face. 
Now we compute the time $r$-th order derivative of $h_i(x)$ such that
\begin{equation}\label{Lieofh}
    h_i^{r}(x)=\cL^{r}_{Ax}h_i(x)+\cL_B\cL_{Ax}^{r-1}h_i(x)u
\end{equation}
Combining $h(x)$ from \eqref{cbf_f} and \eqref{Lieofh} implies
\begin{equation}\label{hr}
    h_i^{r}(x)%=A_{h,i} A^rx+A_{h,i} A^{r-1}Bu
    =A_{h,i} A^{r-1}(Ax+Bu)=A_{h,i} A^{r-1} \dot{x},
\end{equation}
where $A^r$ represents the $r$-th power of $A$. Similar to $h_i^{r}(x)$, the time derivative of $V(x)$ is defined as
\begin{equation}\label{zr}
  V^{r}(x) =z A^{r-1}(Ax+Bu)=z A^{r-1}\dot{x}.
\end{equation}
%$V(x)$ from \eqref{clf_f},  results in,
%\rtron{You need to write this for higher order Lie derivatives. In particular, if $r>1$ then, by the definition of relative degree, we have $0=\cL_B\cL_{A}^qh(x)=A_{h,i}BA^qx$ for $0\leq q \leq r-2$ (please double check).}

Substituting the Lie derivatives \eqref{hr} and \eqref{zr} in \eqref{opt_org} results

% \rtron{for clarity, give the values of all Lie derivatives $\cL_A^r h(x)$, \dots, then say that those for the $V$ are similar} \rtron{In the equation below, make sure to include the constraint for $u\in\cU$}
\begin{equation}\label{opt_min_max}
  \renewcommand{\arraystretch}{1.5}
  \begin{aligned}
    &\min_{K,S_l,S_b}\; w_b\transpose S_b+w_l S_l\\
    &\textrm{s.t.}:\\
    % CBF:
    &\begin{bmatrix}
      \underset{x}{\max}-(A_{h,i}A^r+A_{h,i}A^{r-1}B K_x+A_{h,i}{c_b})x\\
      \textrm{s.t.}:\;\;A_x x \leq b_x\\
    \end{bmatrix}
    \leq \\&
    \quad \quad  \quad \quad \quad \quad \quad  \quad \quad  \quad \quad
    S_{bi}+c_bb_{h,i}+{A_{h,i}}A^{r-1}B K_1Y^\vee\\
    % CLF:
    &\begin{bmatrix}\underset{x}{\max}(z\transpose A^{r}+z\transpose A^{r-1}B K_x+z\transpose {c_l})x \\
      \textrm{s.t.}:\;\;A_x x \leq b_x\\
    \end{bmatrix}
    \leq \\&
    \quad \quad  \quad \quad \quad \quad \quad  \quad \quad  \quad \quad S_l-c_lb_V-z^T A^{r-1}B K_1Y^\vee,\\
    &\begin{bmatrix}
      \underset{x}{\max}(A_{uj}K_x)x\\
      \textrm{s.t.}:\;\;A_x x \leq b_x\\
    \end{bmatrix}\leq b_{uj}-A_{uj}K_1Y^\vee\\
    & S_b,\;S_l \leq 0, \;i=\{1,\hdots,s_h\},j=\{1,\hdots,n_u\}
  \end{aligned}
\end{equation}
In \eqref{opt_min_max}, we have a bi-level optimization problem with constraints that are given themselves by other optimization problems. As all constraints and objective function are linear and $\cX$ is a convex set, \eqref{opt_min_max} and inner maximization problems are linear programming problem so we can change the min-max problem \eqref{opt_min_max} to min-min problem by replacing the inner maximization problems with their dual forms,
\begin{equation}\label{opt_dual}
  \begin{aligned}
    &\min_{K,S_l,S_b}\; w_b\transpose S_b+w_lS_l\\
    & s.t.:\\
    &\begin{bmatrix}
      \min_{\lambda_{b}} \lambda_{b}\transpose b_x \\
      \textrm{s.t.}:\\
      A_x\transpose\lambda_{b}=-(A_{h,i}A^{r}+A_{h,i}A^{r-1}B K_x+A_{h,i}{c_b})\transpose\\
      \lambda_{b}\geq 0,
    \end{bmatrix} \leq\\& \quad  \quad \quad  \quad \quad  \quad \quad  \quad \quad \quad
    S_{bi}+c_bb_{h,i}+{A_{h,i}}A^{r-1}B K_1Y^\vee\\
    &\begin{bmatrix}
      \min_{\lambda_l}\lambda_l  \transpose b_x\\\textrm{s.t.}:\\
      A_x\transpose \lambda_l
      =(z\transpose A^{r}+z\transpose A^{r-1}B K_x+z\transpose {c_l})\transpose\\
      \lambda_l \geq 0
    \end{bmatrix}\leq \\& \quad \quad  \quad \quad  \quad \quad  \quad \quad  \quad \quad \quad S_l-c_lb_V-z^TA^{r-1}B K_1Y^\vee\\
    &\begin{bmatrix}
      \min_{\lambda_l}\lambda_u  \transpose b_x\\\textrm{s.t.}:\\
      A_x\transpose \lambda_u
      =(A_{uj}K_x)\transpose\\
      \lambda_u \geq 0
    \end{bmatrix}\leq b_{uj}-A_{uj}K_1Y^\vee\\
    &S_b,\;S_l \leq 0, \;i=\{1,\hdots,s_h\},j=\{1,\hdots,n_u\}
  \end{aligned}
\end{equation}
where min-min problem \eqref{opt_dual} is equivalent to the minimization problem,
% \rtron{you should not have to consider $A_{h,i}$ and $b_{h,i}$, just consider them as matrices and vectors}
\begin{equation}\label{opt_min}
  \begin{aligned}
    \min_{K,S_l,S_b,\lambda_l,\lambda_b}\; &w_b\transpose S_b+w_lS_l\\
    s.t.:\;
    &\lambda_b \transpose b_x \leq  S_{bi}+c_bb_{h,i}+{A_{h,i}}A^{r-1}B K_1Y^\vee\\
    &\lambda_l\transpose  b_x\leq S_l-c_lb_V-z\transpose A^{r-1}B K_1 Y^\vee\\
    &\lambda_u\transpose  b_x\leq b_{uj}-A_{uj}  K_1 Y^\vee\\
    &A_x\transpose\lambda_b=-(A_{h,i}A^{r}+A_{h,i}A^{r-1}B K_x+A_{h,i}c_b)\\
    &A_x\transpose\lambda_l=(z\transpose A^{r}+z\transpose A^{r-1}B K_x+z\transpose {c_l})\transpose\\
    &A_x\transpose\lambda_u=(A_{uj}K_x)\transpose\\
    &\lambda_l,\lambda_b,\lambda_u\geq 0 \;\;S_b,S_l \leq 0\\& i=\{1,\hdots,s_h\}, j=\{1,\hdots,n_u\}
     \end{aligned}
\end{equation}
In the following we prove that the feasible optimal solution for \eqref{opt_min_max} is also the feasible optimal solution for \eqref{opt_min}.
\begin{remark}\label{strong_duality}
  By strong duality \cite[Theorem 4.4]{LP}, if a linear programming problem has an optimal solution, so does its dual, and the respective optimal costs are equal.
\end{remark}
This remark allows us to prove the following.
\begin{lemma}\label{lem1}
  Optimization problems \eqref{opt_min_max} and \eqref{opt_dual} have the same feasible optimal solution.
\end{lemma}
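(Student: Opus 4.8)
The plan is to apply linear-programming strong duality (Remark~\ref{strong_duality}) to each inner subproblem separately, while treating the outer decision variables $K$, $S_l$, $S_b$ as fixed parameters. The point is that for a fixed outer point, each inner maximization of \eqref{opt_min_max} and the inner dual minimization that replaces it in \eqref{opt_dual} attain the same optimal value, so the two constraints they induce on $(K,S_l,S_b)$ are equivalent. Since this holds for every such constraint simultaneously, the feasible sets of \eqref{opt_min_max} and \eqref{opt_dual} in the outer variables coincide, and because both problems minimize the same objective $w_b\transpose S_b + w_l S_l$, their optimal values and optimal solutions coincide as well.

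First I would fix an arbitrary value of $(K,S_l,S_b)$ and isolate each inner problem of \eqref{opt_min_max}. Each one is a linear program of the form $\max_x c\transpose x$ subject to $A_x x \leq b_x$, where the cost vector $c$ is the (now fixed) linear functional in the corresponding constraint: $c\transpose = -(A_{h,i}A^{r} + A_{h,i}A^{r-1}B K_x + A_{h,i}c_b)$ for the $i$-th barrier constraint, $c\transpose = z\transpose A^{r} + z\transpose A^{r-1}B K_x + z\transpose c_l$ for the Lyapunov constraint, and $c\transpose = A_{uj}K_x$ for the $j$-th input constraint. Because the cell $\cX = \{x : A_x x \leq b_x\}$ is a nonempty compact polytope, each of these LPs is feasible and bounded above, hence attains a finite optimum $p^{\ast}$. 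I would then write the LP dual: the dual of $\max\{c\transpose x : A_x x \leq b_x\}$ is $\min\{\lambda\transpose b_x : A_x\transpose\lambda = c,\ \lambda \geq 0\}$, which is exactly the inner minimization substituted in \eqref{opt_dual}, with $\lambda$ playing the role of $\lambda_b$, $\lambda_l$, or $\lambda_u$. By Remark~\ref{strong_duality}, since the primal attains an optimum so does the dual, and their optimal values are equal, $d^{\ast} = p^{\ast}$; in particular the dual is feasible, so the inner minimization in \eqref{opt_dual} is well posed.

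Putting this together, for the fixed outer variables the constraint in \eqref{opt_min_max}, namely $p^{\ast} \leq (\text{right-hand side})$, holds if and only if the constraint in \eqref{opt_dual}, namely $d^{\ast} \leq (\text{same right-hand side})$, holds; note the right-hand sides are identical expressions in $K$, $S_l$, $S_b$ and involve neither $x$ nor $\lambda$, so nothing on the right changes under the substitution. Applying this equivalence to all $i = 1,\dots,s_h$, to the Lyapunov constraint, and to all $j = 1,\dots,n_u$ at once shows that $(K,S_l,S_b)$ is feasible for \eqref{opt_min_max} exactly when it is feasible for \eqref{opt_dual}. Equality of feasible sets plus equality of objectives then gives equality of the optimal solution sets, which is the claim.

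The main obstacle is the bookkeeping needed to guarantee that the hypotheses of strong duality hold for \emph{every} value of the outer variables rather than only generically, so that the max-to-dual-min replacement is an exact reformulation and not a one-sided relaxation, and so that the inner minimizations in \eqref{opt_dual} are never taken over an empty set. Feasibility of the inner primal LPs follows from $\cX \neq \emptyset$ and boundedness from $\cX$ compact, both independent of $K$. One should also note that the argument degrades gracefully in the pathological case where an inner primal happens to be unbounded: then its value is $+\infty$, which exceeds the finite right-hand side, so \eqref{opt_min_max} is infeasible at that outer point; simultaneously the dual is infeasible, so the inner minimization in \eqref{opt_dual} is vacuous (value $+\infty$) and \eqref{opt_dual} is infeasible there too — the equivalence is preserved. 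Finally, it is worth stating explicitly that the equivalence is established pointwise in the outer variables, which is precisely what is needed to conclude equality of the feasible sets and hence of the optimal solutions; no joint convexity in $(K,\lambda)$ is invoked at this stage, that being the content of the separate flattening of \eqref{opt_dual} into \eqref{opt_min}.
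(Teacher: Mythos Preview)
Your proposal is correct and follows essentially the same approach as the paper: apply LP strong duality (Remark~\ref{strong_duality}) to each inner subproblem so that the primal and dual inner optimal values coincide, making the constraints in \eqref{opt_min_max} and \eqref{opt_dual} equivalent, and then conclude from the common objective that the optimal solutions agree. Your version is more carefully argued---in particular you verify the feasibility and boundedness hypotheses needed for strong duality via compactness of $\cX$, and you handle the degenerate unbounded case---but the underlying idea is the same as the paper's.
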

\begin{proof}
  The optimization problems in \eqref{opt_min_max} and \eqref{opt_dual} have the same objective functions. Constraints in \eqref{opt_min_max} are in the form of LP optimization problem and the constraints in \eqref{opt_dual} are the duals. According to the Remark \ref{strong_duality}, the optimal cost of constraints in \eqref{opt_min_max} and \eqref{opt_dual} are equal and result the same constraints with the same objective functions which implies \eqref{opt_min_max} and \eqref{opt_dual} have the same optimal solution.
\end{proof}
\begin{lemma}\label{lem2}
  Optimization problems \eqref{opt_dual} and \eqref{opt_min} have the same feasible optimal solution.
\end{lemma}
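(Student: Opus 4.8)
The plan is to show that \eqref{opt_dual}, which is a \emph{min--min} program (an outer minimization over $K,S_l,S_b$ whose constraints are themselves minimization problems over the dual variables $\lambda_b,\lambda_l,\lambda_u$), collapses into the single joint minimization \eqref{opt_min}. The two programs already share the same objective $w_b\transpose S_b+w_lS_l$, which does not involve the dual variables, so the only thing to verify is that the feasible sets, projected onto the $(K,S_l,S_b)$ coordinates, coincide, and that the lifted variables can always be chosen optimally.

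First I would record the elementary fact that a constraint of the form $\bigl[\min_{\lambda}\{\lambda\transpose b_x : A_x\transpose\lambda = v(K),\ \lambda\ge 0\}\bigr]\le \beta(K,S)$, where $v(\cdot)$ and $\beta(\cdot)$ collect the affine right-hand sides appearing in \eqref{opt_dual}, is equivalent to the \emph{existence} of a $\lambda\ge 0$ with $A_x\transpose\lambda = v(K)$ and $\lambda\transpose b_x\le\beta(K,S)$ --- provided the inner minimum is attained whenever the inner feasible set is nonempty. Attainment is where the LP structure enters: for each fixed value of the outer variables, the corresponding inner \emph{maximization} in \eqref{opt_min_max} is feasible (since $\cX\neq\emptyset$) and bounded (since $\cX$ is compact), hence by Remark~\ref{strong_duality} its dual --- the inner minimization in \eqref{opt_dual} --- is feasible with finite optimal value, and a finite-valued LP attains its optimum; this is exactly the reasoning already used in Lemma~\ref{lem1}. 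Therefore the inner minimum in each constraint of \eqref{opt_dual} (for every index $i$ and $j$) is always attained at some $\lambda^\ast$.

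Next I would run the two inclusions. If $(K,S_l,S_b,\lambda_b,\lambda_l,\lambda_u)$ is feasible for \eqref{opt_min}, then each $\lambda$ is a feasible point of the corresponding inner LP whose value does not exceed the stated right-hand side, so the inner minima are also $\le$ the right-hand sides and $(K,S_l,S_b)$ is feasible for \eqref{opt_dual}; since the objectives agree, the optimal value of \eqref{opt_min} is at least that of \eqref{opt_dual}. Conversely, given a feasible (in particular optimal) $(K,S_l,S_b)$ for \eqref{opt_dual}, the attainment argument lets us pick minimizers $\lambda_b^\ast,\lambda_l^\ast,\lambda_u^\ast$ of the inner problems; these satisfy exactly the equality constraints and the inequality constraints of \eqref{opt_min}, so $(K,S_l,S_b,\lambda_b^\ast,\lambda_l^\ast,\lambda_u^\ast)$ is feasible for \eqref{opt_min} with the same objective value. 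Hence the two optimal values coincide, and any optimizer of one yields an optimizer of the other on the shared coordinates $K,S_l,S_b$.

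The main obstacle I anticipate is not the flattening itself --- that is a standard ``lifting'' manipulation for nested minimizations with a common objective --- but making the attainment step airtight: one must be sure the inner dual LPs are genuinely feasible and finite for \emph{every} relevant choice of the outer variables, which is precisely what strong duality together with the compactness of $\cX$ provides (and what Lemma~\ref{lem1} already leveraged). A secondary point worth stating explicitly is that enlarging the decision space by $\lambda_b,\lambda_l,\lambda_u$, which are absent from the objective, cannot change the optimal value, so the phrase ``same feasible optimal solution'' is to be read in the sense of the common variables $K,S_l,S_b$.
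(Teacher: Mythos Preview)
Your proposal is correct and follows essentially the same two-inclusion argument as the paper: feasibility for \eqref{opt_min} immediately gives feasibility for \eqref{opt_dual} with the same objective, and conversely an optimal point of \eqref{opt_dual} can be lifted to \eqref{opt_min} by picking attaining dual variables for the inner LPs. Your write-up is more careful than the paper's (which dispatches the lemma in two sentences and a citation to \cite{robust_opt}) in that you make explicit why the inner minima are attained, via compactness of $\cX$ and strong duality, but the route is the same.
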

\begin{proof}
  Assume we have an optimal solution for \eqref{opt_min}, then the solution is also feasible for \eqref{opt_dual} and the objective costs are the same. In the same way, if we have an optimal solution for \eqref{opt_dual}, so there must exist dual variables for inner optimization problem in \eqref{opt_dual} which are also feasible for \eqref{opt_min} and result in the same objective cost\cite{robust_opt}.
\end{proof}

%\rtron{Turn this into Theorem}
\begin{theorem}
From Lemma \ref{lem1} and Lemma \ref{lem2}, the optimization problems \eqref{opt_min_max}, \eqref{opt_dual} and \eqref{opt_min} are equivalent.
\end{theorem}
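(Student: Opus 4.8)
The plan is to derive this directly by transitivity from the two lemmas already established, since the theorem statement is precisely the conjunction of Lemma~\ref{lem1} and Lemma~\ref{lem2} together with the observation that ``having the same feasible optimal solution'' is a transitive relation when the objective functions coincide. First I would record that all three problems \eqref{opt_min_max}, \eqref{opt_dual}, and \eqref{opt_min} share the identical outer objective $w_b\transpose S_b + w_l S_l$ in the variables $(K, S_l, S_b)$ (with \eqref{opt_min} carrying the extra multiplier variables $\lambda_l,\lambda_b,\lambda_u$ that do not enter the cost); this is what makes it meaningful to compare optimal solutions and costs across the three formulations.

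Next I would invoke Lemma~\ref{lem1} to conclude that \eqref{opt_min_max} and \eqref{opt_dual} have the same feasible optimal solution and the same optimal value --- the content there being strong LP duality (Remark~\ref{strong_duality}) applied to each of the inner maximization problems over $x\in\cX$, which are bounded linear programs because $\cX$ is a compact polytope and the objectives are linear in $x$. Then I would invoke Lemma~\ref{lem2} to conclude that \eqref{opt_dual} and \eqref{opt_min} have the same feasible optimal solution, the content there being the standard ``flattening'' of a nested $\min$-over-$\min$ into a single joint minimization: a point feasible for \eqref{opt_min} is feasible for \eqref{opt_dual} with the same cost, and conversely an optimizer of \eqref{opt_dual} comes equipped with inner-problem optimizers $\lambda_b,\lambda_l,\lambda_u$ that together form a feasible point of \eqref{opt_min} at the same cost. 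Chaining these two equalities of optimal solutions and optimal values yields that \eqref{opt_min_max} and \eqref{opt_min} also agree, so all three are equivalent.

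There is essentially no obstacle at the level of the theorem itself; the only point requiring care is making the notion of ``equivalent'' precise and checking it is genuinely transitive here --- i.e., that the bijections between optimal sets asserted in the two lemmas are compatible on the shared variables $(K,S_l,S_b)$ and that the optimal costs match pairwise, so that composing them gives a correspondence between optimal solutions of \eqref{opt_min_max} and \eqref{opt_min}. All the real work (boundedness/feasibility of the inner LPs, applicability of strong duality, correctness of the epigraph reformulation) has already been discharged in the preceding lemmas and remarks, so the proof is a short two- or three-sentence argument: ``By Lemma~\ref{lem1}, \eqref{opt_min_max} and \eqref{opt_dual} are equivalent; by Lemma~\ref{lem2}, \eqref{opt_dual} and \eqref{opt_min} are equivalent; hence by transitivity all three are equivalent, and in particular a feasible optimal solution of \eqref{opt_min_max} is a feasible optimal solution of \eqref{opt_min} and conversely.''
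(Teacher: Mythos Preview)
Your proposal is correct and matches the paper's approach exactly: the paper states the theorem with no separate proof, treating it as an immediate consequence of Lemma~\ref{lem1} and Lemma~\ref{lem2} via transitivity, which is precisely the argument you spell out. Your additional care about the shared outer objective and the compatibility of the equivalences on the common variables $(K,S_l,S_b)$ is a welcome clarification but does not depart from the paper's (implicit) reasoning.
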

\begin{proposition}
Solving \eqref{opt_min} and assuming optimal $S_l$ is strictly less than zero, then the trajectory exit from the cell in finite time. 

 \begin{proof}
    For the first order system $c_V>0$ is scalar and $V(x)$ is positive definite, define maximum distance from the exit face as 
    \begin{equation}
        d_{max}=max\{V(x)|A_xx\leq b_x\}.
    \end{equation}
    For the first order system, the CLF constraints in \eqref{opt_dual} implies
    \begin{equation}
        \dot{V(x(t))}+c_VV(x(t))\leq S_l
    \end{equation}
    where $Sl< 0$ and $g_1=c_VV(x(t))\geq0$ and results in
    \begin{equation}
        \dot{V(x(t))}\leq S_l-g_1=g_2, g_2\leq0
    \end{equation}
Solving the above differential equation shows $V(x(t))\leq V(x_0)+g_2t $. To pass the exit face we need to have $V(x(t))=0$ as the $V(x(t))$ shows the distance from the exit face, so $t_{exit}\leq -\frac{d_{max}}{g_2}$ and the controller reaches the exit face in finite time as $-\frac{d_{max}}{g_2}$ has a finite value. 
 \end{proof}
\end{proposition}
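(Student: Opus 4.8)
The plan is to reduce the statement to a one-dimensional comparison argument on the scalar $V(x(t))$ evaluated along the closed-loop trajectory, using that in the first-order case ($r=1$) the Lyapunov row of \eqref{opt_min} is exactly the pointwise inequality $\dot V(x)+c_V V(x)\le S_l$, valid for every $x\in\cX$. First I would collect the ingredients: by the construction of $b_V$ in Section~\ref{sec_clf}, $V$ is affine with $V(x)\ge 0$ on the cell and $V(x)=0$ precisely on the exit face $\cP_{exit}$; $c_V>0$ is a scalar (it stabilizes the scalar transversal dynamics, so $F-Gc_V=-c_V$ is Hurwitz); and $\cX=\{x:A_x x\le b_x\}$ is a compact polytope, so
\[
  d_{\max}=\max\{V(x):A_x x\le b_x\}
\]
is finite and attained. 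I would also note that the synthesized gains from \eqref{opt_min} are the ones in force as long as the state remains in $\cX$, so the CLF inequality holds along the trajectory while $x(t)\in\cX$.

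The comparison step is then immediate: since $V(x(t))\ge 0$ and $c_V>0$ while $x(t)\in\cX$, the CLF constraint yields $\dot V(x(t))\le S_l-c_V V(x(t))\le S_l$, a strictly negative constant. Integrating on $[0,t]$ gives $V(x(t))\le V(x(0))+S_l t\le d_{\max}+S_l t$, whose right-hand side reaches zero at the finite time $t^\star=-d_{\max}/S_l$ and is negative afterwards.

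To finish I would argue by contradiction with forward invariance. If the trajectory never left $\cX$ then $V(x(t))\ge 0$ for all $t\ge 0$, contradicting $V(x(t))<0$ for $t>t^\star$; hence $x(t)$ leaves $\cX$ at some finite $t_{exit}\le t^\star$. It remains to identify through which face the departure occurs: the CBF rows of \eqref{opt_min}, one for each non-exit face of $\cX$, certify via Proposition~\ref{prop:ECBF} forward invariance of each half-space $\{h_i(x)\ge 0\}$, so no obstacle face can be crossed; therefore the trajectory must cross $\cP_{exit}$, i.e., $V(x(t_{exit}))=0$, and the explicit bound $t_{exit}\le -d_{\max}/S_l$ holds.

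The only delicate point is the interplay between ``the CLF estimate is valid only inside the cell'' and ``we want to conclude the cell is left'': the contradiction must be phrased on the maximal interval on which $x(t)\in\cX$, and the CBF-based invariance must be invoked both to rule out $V$ becoming negative while inside $\cX$ and to guarantee the exit happens through $\cP_{exit}$ rather than an obstacle face. A genuinely harder extension would be relative degree $r>1$: there $V(x(t))$ need not be monotone, and one would have to track the full transversal state $\xi_V$ and lean on the exponential decay from Proposition~\ref{prop:ECLF} in place of the clean scalar bound, so I would keep the statement — as is done here — to the first-order integrator.
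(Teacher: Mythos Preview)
Your proposal is correct and follows essentially the same approach as the paper's proof: bound $\dot V$ by a strictly negative constant via the CLF row, integrate, and read off a finite exit time from $d_{\max}$. Your version is in fact a bit cleaner---you bound $\dot V\le S_l$ directly rather than via the time-varying quantity $g_2=S_l-c_V V(x(t))$ the paper introduces---and your added contradiction phrasing together with the CBF observation (that the departure must be through $\cP_{exit}$ rather than an obstacle face) are welcome clarifications the paper leaves implicit.
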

%\rtron{Also, you should be able to state that the controller reaches the output in finite time (from the CLF inequality with the slack variable)}
%%%%%%%%%%%%%%%%%%%%%%%%%%%%%%%%%%%%%%%%%%%%%%%%%%%%%%%%%%%%%%%%%%%%%%%%%%%%%%%%%%%%%%%%%%%%%%%%%%%%%%%%%%%%%%%%%%%%%%%%%%%%%%%%
\section{STATIONARY POINT}\label{stationary}
Consider the stabilization objective \ref{it:point-stabilization} defined in Section \ref{planning}, and let $x_g$ be the stabilization point in $\cX$, i.e., the exit vertex in $\cP_{exit}$ (see Definition~\ref{exit_dir}). In this section we provide sufficient conditions that show the controllers synthesized with our proposed method indeed introduce an equilibrium point at $x_g$.
Before proceeding, we need the following. We use $\stack()$ to denote the operator that stacks vertically all its matrix arguments.
\begin{fact}\label{fact:h-cone}
  Let $A_{h,exit}$ be the matrix whose rows are the row vectors in the set $\{A_{h,i}: h_i(x_g)=0\}$. Then $z$ belongs to the proper cone $\{v: A_{h,exit}v\geq 0\}$. %\rtron{Add reference to Boyd, ``Convex optimization'' book.}
\end{fact}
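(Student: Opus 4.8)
The plan is to obtain the claim from the first‑order optimality of $x_g$ as a minimizer of the (linear) control Lyapunov function over the cell, which is a direct application of Farkas' lemma — equivalently, of the linear‑programming strong duality already recorded in Remark~\ref{strong_duality}. By the choice of $b_V$ in Section~\ref{sec_clf}, the function $V(x)=z^\transpose x + b_V$ satisfies $V(x)\ge 0$ for every $x\in\cX$ and $V(x_g)=0$; hence $x_g\in\argmin_{x\in\cX} z^\transpose x$, i.e.\ $x_g$ is optimal for a linear program over the polytope $\cX$. Read this way, the claim is exactly that the cost vector $z$ is expressible, with nonnegative coefficients, through the gradients of the constraints that are active at $x_g$.

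Concretely I would proceed as follows. First, write the cell as $\cX=\{x:\,h_i(x)=A_{h,i}x+b_{h,i}\ge 0\}$, collecting the face barriers of Section~\ref{sec_cbf} together with the dynamic constraints $A_{\xdyn}x\le b_{\xdyn}$; since $0\in\mathrm{int}(\cX_{\textrm{dyn}})$ and $\Pdyn z=0$, the dynamic constraints are slack at $x_g$ and contribute nothing, so the constraints active at $x_g$ are precisely $\{i:\,h_i(x_g)=0\}$, whose gradients $A_{h,i}^\transpose$ are the (transposed) rows of $A_{h,exit}$. Second, note that optimality of $x_g$ forbids a feasible descent direction, i.e.\ the system $A_{h,exit}d\ge 0$, $z^\transpose d<0$ is infeasible. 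Third, apply Farkas' lemma to this system to produce multipliers $\mu\ge 0$ with $z=A_{h,exit}^\transpose\mu$, which exhibits $z$ inside the proper cone generated by the rows of $A_{h,exit}$ — the cone asserted in the statement. If one additionally needs $z$ in the relative interior of that cone (so that $x_g$ is the \emph{unique} minimizer of $V$, as the asymptotic‑stability results downstream will want), one strengthens the first step using that $V>0$ on $\cX\setminus\{x_g\}$ to get strict complementarity / strictly positive multipliers.

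I do not anticipate a genuine difficulty here: the content is the textbook KKT/Farkas characterization of optimality of a linear functional at a vertex of a polyhedron. The only care required is bookkeeping — confirming that the dynamic barriers are inactive at $x_g$ (so that the multiplier expansion of $z$ involves only the rows of $A_{h,exit}$, and one may work in the position subspace where $\Pdyn z=0$), and fixing once and for all the orientation of each $A_{h,i}$ inherited from the cell's defining inequalities so that $h_i\ge 0$ really does carve out $\cX$ and $A_{h,i}^\transpose$ is the corresponding inward face normal. With that convention settled, the second and third steps are immediate.
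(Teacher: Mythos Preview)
Your Farkas/KKT argument is sound, but it establishes a different membership than the one literally stated. The set $\{v:A_{h,exit}v\geq 0\}$ is the \emph{feasible-direction cone} at $x_g$---the vectors making a nonnegative inner product with every row of $A_{h,exit}$---whereas your conclusion $z=A_{h,exit}^\transpose\mu$ with $\mu\geq 0$ places $z$ in the cone \emph{generated by} those rows. These are dual cones and are generally distinct; you conflate them at the end of your third step when you call the generated cone ``the cone asserted in the statement.'' For a concrete check, take rows $(1,\varepsilon)$ and $(-1,\varepsilon)$ in $\real{2}$ with small $\varepsilon>0$: the feasible cone is the thin wedge $\{v:\abs{v_1}\leq\varepsilon v_2\}$, the generated cone is the wide wedge $\{w:\varepsilon\abs{w_1}\leq w_2\}$, and the row $(1,\varepsilon)$ lies in the latter but not the former.

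The paper itself gives no formal proof: it simply remarks that the fact is ``intuitive given Definition~\ref{exit_dir},'' since $z$ is by construction chosen as an inward-pointing direction at $x_g$ and the rows of $A_{h,exit}$ are the inward normals of the faces active there---so membership in the feasible cone is read off the setup rather than derived. That said, what you actually prove (membership in the dual/generated cone, and with your strict-complementarity remark, in its interior) is exactly what the proof of Proposition~\ref{goalPoint} invokes when it claims that $\{v:A_{h,exit}v\geq 0\}\cap\{v:z^\transpose v\leq 0\}=\{0\}$; mere membership of $z$ in the feasible cone would not deliver that intersection property. So your argument is correct and downstream-relevant, just not a proof of the fact as written.
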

This fact is intuitive given Definition~\ref{exit_dir}: $A_{h,\textrm{exit}}$ represents the normals of the active constraints at the stabilization point, and $z$ needs to be inward-pointing. Note that the rows or $A_{h,\textrm{exit}}$ are a subset of the rows of $A_{\xpos}$. We can now state the main result of this section.
\begin{proposition}\label{goalPoint}
  Assume the pair $(A,\stack(A_{h,\textrm{exit}},z\transpose))$ is observable and that all $h_i$ and $V$ have the same relative degree $r$. Then, any solution to the min-max problem \eqref{opt_robust} (or, equivalently, the linear program \eqref{opt_min}) guarantees that $\dot{x}=0$ when $\xpos=x_g$.
\end{proposition}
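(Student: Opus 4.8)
The plan is to show that $\eta := A x_g + B u^\star(x_g)=0$, where $u^\star$ is the controller built from any optimal $K$; since the closed loop is $\dot x = Ax + Bu^\star(x)$, this is exactly the assertion $\dot x=0$ at $x_g$. I will proceed in three stages: (a) prove that at $x_g$ the transversal states of all \emph{active} barrier functions and of $V$ vanish; (b) evaluate the CLF/CBF constraints of \eqref{opt_robust} at $x_g$ to get sign information on $A_{h,i}A^{r-1}\eta$ and $z\transpose A^{r-1}\eta$; (c) assemble a family of linear relations on $\eta$ and close with observability. Note that in the stabilization case every face of $\cX$ carries a barrier constraint (as do the bounds on $\xdyn$), so $\cX$ is forward invariant by Proposition~\ref{prop:ECBF}, which makes the consequences of Propositions~\ref{prop:ECBF}--\ref{prop:ECLF} available.

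For stage (a): by construction $h_i(x_g)=0$ for every row $A_{h,i}$ of $A_{h,\textrm{exit}}$ and $V(x_g)=0$ (the exit vertex minimizes $V$ over $\cX$), so the leading entries of $\xi_{h_i}(x_g)$ and $\xi_V(x_g)$ are zero; it remains to show $\cL_{Ax}^{j}h_i(x_g)=A_{h,i}A^{j}x_g=0$ and $\cL_{Ax}^{j}V(x_g)=z\transpose A^{j}x_g=0$ for $1\le j\le r-1$. I would induct on $j$: Proposition~\ref{prop:ECBF}, claim~1, gives $\cL_{Ax}^{j}h_i(x_g)\ge -p_1\cL_{Ax}^{j-1}h_i(x_g)$ and Proposition~\ref{prop:ECLF}, subclaim~\ref{it:der}, gives $\cL_{Ax}^{j}V(x_g)\le -q_1\cL_{Ax}^{j-1}V(x_g)$; the inductive hypothesis zeroes both right-hand sides, leaving $\cL_{Ax}^{j}h_i(x_g)\ge 0$ for all active $i$ and $\cL_{Ax}^{j}V(x_g)\le 0$. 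Using Fact~\ref{fact:h-cone} to express the exit direction $z$ through the active face normals, $\cL_{Ax}^{j}V(x_g)$ becomes a nonnegative combination of the $\cL_{Ax}^{j}h_i(x_g)$, so the opposite signs pinch all of them to $0$; hence $\xi_{h_i}(x_g)=0$ and $\xi_V(x_g)=0$.

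For stage (b): the constraints of \eqref{opt_robust} hold for all $x\in\cX$, in particular at $x_g$, where the terms $c\transpose_b\xi_{h_i}(x_g)$ and $c\transpose_V\xi_V(x_g)$ drop by stage (a). What remains is $h_i^{r}(x_g)\ge 0$ for each active $i$ and $V^{r}(x_g)\le S_l\le 0$; by \eqref{hr}--\eqref{zr} these read $A_{h,i}A^{r-1}\eta\ge 0$ and $z\transpose A^{r-1}\eta\le 0$, and the same cone pinch forces $A_{h,i}A^{r-1}\eta=0$ and $z\transpose A^{r-1}\eta=0$. For stage (c), set $C=\stack(A_{h,\textrm{exit}},z\transpose)$. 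Since $\xi_{h_i}(x_g)=0$, the transversal dynamics \eqref{xi2} give $\dot\xi_{h_i}(x_g)=G\,h_i^{r}(x_g)=G\,(A_{h,i}A^{r-1}\eta)$, while the chain rule gives $\dot\xi_{h_i}(x_g)=[A_{h,i}\eta;\dots;A_{h,i}A^{r-1}\eta]$ (and likewise for $V$); equating these, together with stage (b), yields $CA^{j}\eta=0$ for $j=0,\dots,r-1$. Since all the $h_i$ and $V$ share the relative degree $r$, these fill out the observability matrix of $(A,C)$ truncated at depth $r$; observability of $(A,C)$ then forces $\eta=0$, i.e.\ $\dot x=0$ at $x_g$.

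I expect the crux to be the two interlocking squeezes in stages (a)--(b): they must combine the barrier lower bounds, the Lyapunov upper bounds, and the cone geometry of Fact~\ref{fact:h-cone} in a single step, and — more subtly — the relations they produce are capped at Lie-derivative order $r-1$, so one must check that, thanks to the \emph{common} relative degree $r$ and the observability of $(A,\stack(A_{h,\textrm{exit}},z\transpose))$, these truncated relations already pin $\eta$ down to $0$. That last alignment between the relative-degree structure and the observability matrix is where both hypotheses of the proposition are genuinely consumed, and it is the part requiring the most care.
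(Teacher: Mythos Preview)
Your proposal is correct and follows essentially the same route as the paper: both combine the derivative inequalities of Propositions~\ref{prop:ECBF}--\ref{prop:ECLF} with the cone geometry of Fact~\ref{fact:h-cone} to pinch the relevant Lie derivatives to zero at $x_g$, and then invoke observability of $(A,\stack(A_{h,\textrm{exit}},z\transpose))$ to force $\dot{x}=0$. Your three-stage decomposition is more explicit than the paper's compressed argument, and you rightly flag the need to align the relative-degree truncation with the depth of the observability matrix---a point the paper simply asserts.
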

Note that the assumption about having a homogeneous relative degree is reasonable, since $z\transpose$ and $A_{h,\textrm{exit}}$ all essentially represent generic planes in the environment.

\begin{proof}
  Any feasible controller must satisfy the CBF and CLF constraints in~\eqref{opt_robust}. As discussed above, we have $V(x_g)=0$ for the Lyapunov function, and $h_i(x_g)=0$ for the constraints corresponding to $A_{h,\textrm{exit}}$. Recalling that $\cL_A^0V=V$, and using the fact the CLF constraint in \eqref{opt_robust} implies Proposition~\ref{prop:ECLF}, claim~\ref{it:der}, we have that $\cL_A^j V=z\transpose A^j\dot{x}\leq -c_v \cL_A^{j-1} V= z\transpose A^{j-1}\dot{x}$ for all $0\leq j \leq r$. A similar argument with the CBF constraint in \eqref{opt_robust} and Proposition~\ref{prop:ECBF} implies that $A_{h,\textrm{exit}}A^j\dot{x}\geq -c_hA_{h,\textrm{exit}}A^{j-1}\dot{x}$.
  From Fact~\ref{fact:h-cone}, we have that the sets described by $A_{h,\textrm{exit}}v\geq 0$ and $z\transpose v\leq 0$ intersect only at the point $v=0$; hence, $\stack(A_{h,\textrm{exit}},z\transpose) A^j\dot{x}=0$ for all $0\leq j \leq r-1$, which can be compactly described as $\cO_{A} \dot{x}=0$, where $\cO_{A}$ is the observability matrix from the pair $(A,\stack(A_{h,\textrm{exit}},z\transpose))$. Since the latter is observable, $\cO_{A}$ is full rank, and hence $\dot{x}=0$ as claimed. 
\end{proof}
Intuitively, the proof shows that the CLF and CBF constraints fix $\xpos$ to $x_g$, which together with the observability implies that also $\xdyn=0$.
%\rtron{for $r>1$, you only get two sets of equations which do not give enough constraints to fully pin down $\dot x$ (you can try with the double integrator to see this); my intuition is that you need to use the fact that the constraints in \eqref{opt_r} imply $-A_{h,i}A^j\dot x\leq 0$ for all $0\leq j<r-1$ (this should come out from the proof of the ECBF or HCBF), with similarly constraints for the CLF; then you will probably need to use the fact that \eqref{xi1} is controllable to argue that all these constraints (for all $j$) span a subspace of sufficient dimension. If you cannot fix this proof in time for the deadline, only claim that this is true for $r=1$, i.e., for the simple integrator case.} .

%%%%%%%%%%%%%%%%%%%%%%%%%%%%%%%%%%%%%%%%%%%%%%%%%%%%%%%%%
\section{NUMERICAL EXAMPLES}\label{numerical results}
In this section, we apply our proposed method to two non-simply connected environments to find a output-feedback controller, then we deform the environment and implement the same controller to represent the robustness of the controller. We apply our method to the fist and second integrator systems to achieve two planning objectives of our algorithm. The first example considers the point stabilization \ref{it:point-stabilization} objective and the second example shows the patrolling \ref{it:patrolling} objective (The choice of the fist and second integrator systems is independent of the planning objective and complexity of the environment).
\begin{figure}[t]
  \centering
  \subfloat[Polygonal environment]{\label{1-1}{\includegraphics[width=4cm]{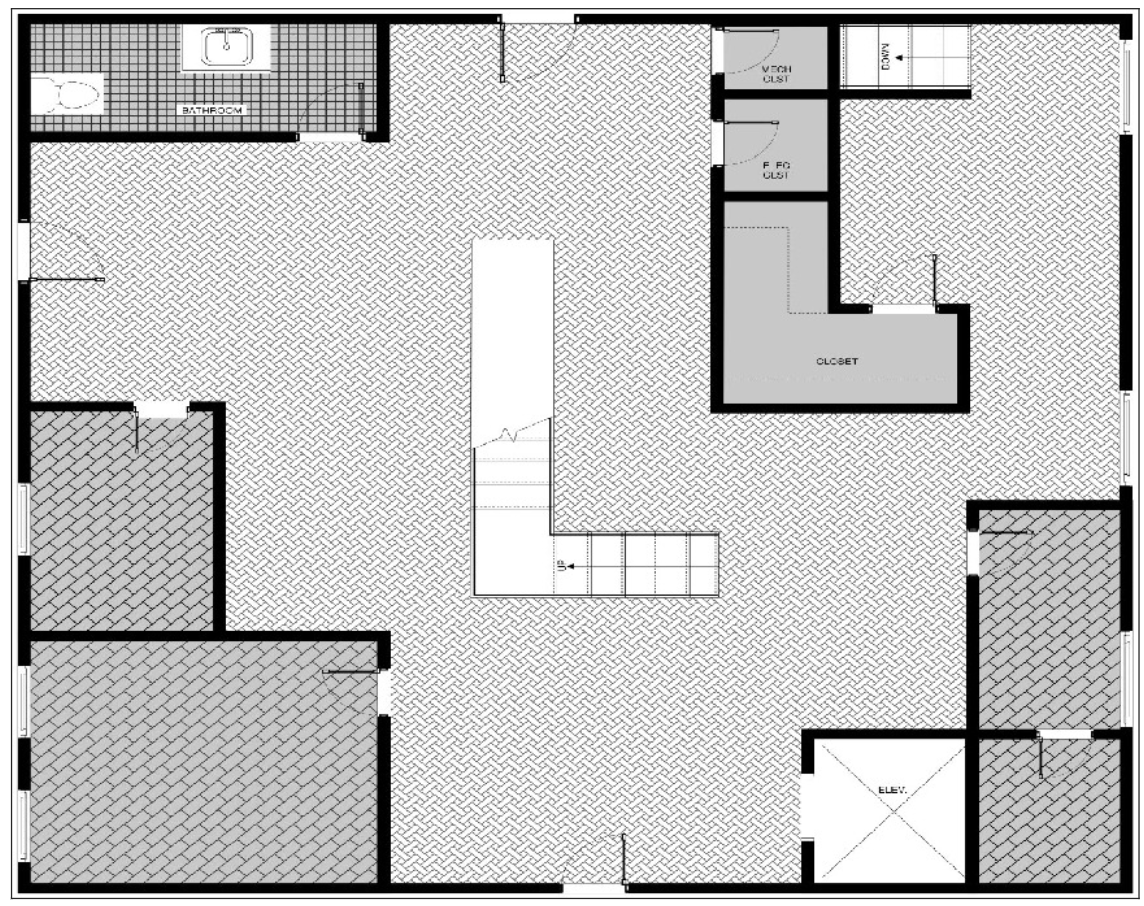} }}
  \subfloat[Decomposed environment]{{\label{1-2}\includegraphics[width=4cm]{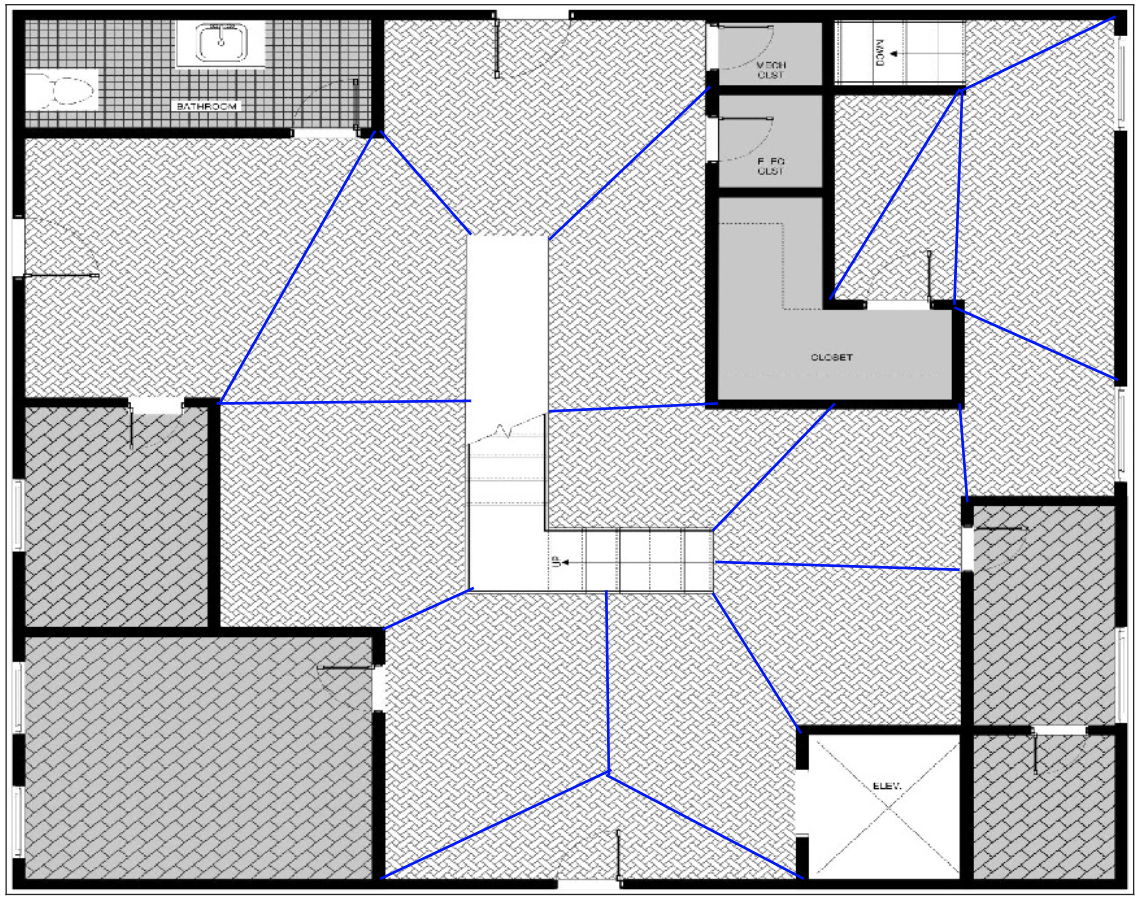} }}%
  \caption{Non-simply connected environment is decomposed to a set of convex cells. }
  \label{fig_env2}
\end{figure}
\subsection{First Order Controller}
Given the environment in Fig \ref{fig_env2}, we want to find a set of controllers to move the agent from three different entrances (start points) of a building to the exit door (goal point). To achieve this goal, we decompose the layout of the floor into 16 convex cells. We choose $c_b=c_v=0.5$.
Given the decomposed environment, we find a controller for each cell individually and move the agent from the start points. In this example, we assume the landmarks for each cell are same as the vertices of the cell and we define as $z$ the exit direction of the cell.  Solving the optimization problem \eqref{opt_min} finds the optimal $K_1$ for each cell which implies the optimal controller. Entering the building from different doors in Fig.~\ref{org}, the first order controller moves the agent from to the exit door without violating any constraints. Assume the layout of the building is changed due to the construction purposes in Fig.~\ref{def} and the agent enters the building aims to proceed to the exit door. When the layout deformed, the convex cells change as the position of landmarks and given the old optimal $K_1$ from the original layout and new position of landmarks, the agent is able to proceed to the exit door starting from different entrances and meeting all the safety and stability constrains.  
\begin{figure}%
  \centering
  \subfloat[
  Original environment]{\label{org}{\includegraphics[width=4cm]{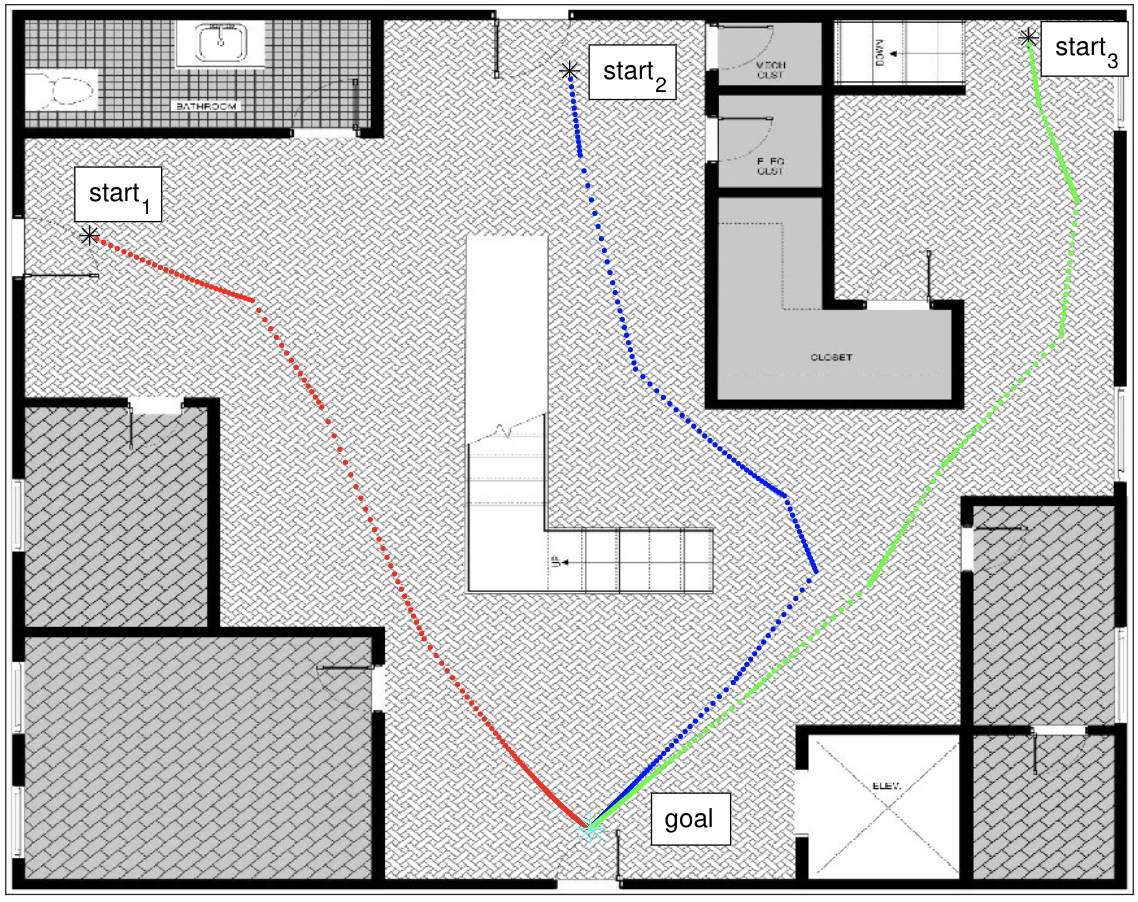} }}
  \subfloat[Deformed environment]{{\label{def}\includegraphics[width=4cm]{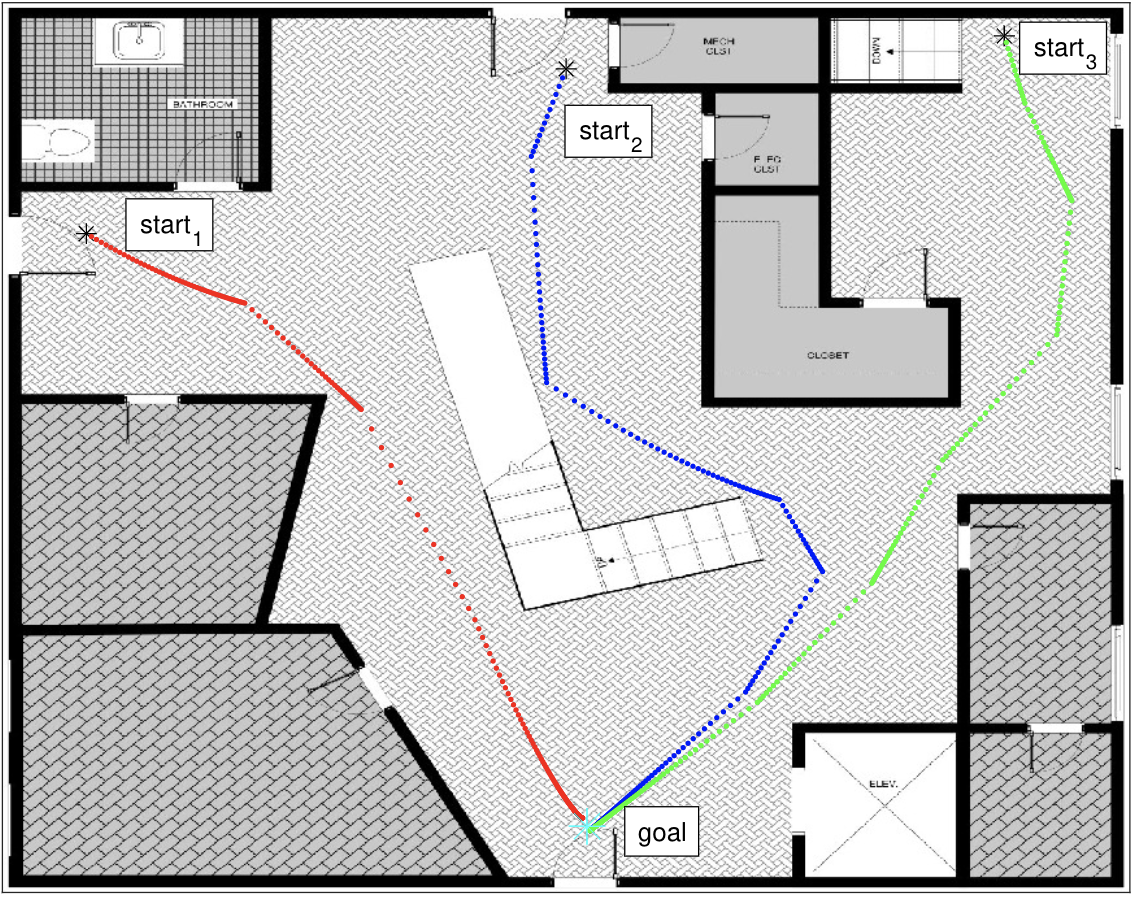} }}%
  \caption{In this example, we deform the layout and apply the first order controller of the original environment to the deformed layout. Despite the fact that we deform the layout significantly, the exact same original controllers generate successful trajectories.}%
  \label{sol1_reshape}%
\end{figure}
\subsection{Second Order Controller}
In this section, we find a controller for a second order system and similar to the first order system, we assume the landmarks are equivalent to the vertices of each cell and we donate $z$ the exit direction of the cell.
we apply our method to a non-simply connected environment.
The controller moves the agent from the start point at $[10,40]$ in Fig.~\ref{secObsOrg}, we choose $c_b=[1,1]\transpose$ and $c_V=[1,1]\transpose$ for all $i=\{1,\hdots,s_h\}$. Then, we enlarge the obstacle and apply the same controller to the agent in Fig.~\ref{secOdsEng}. Our method guarantees that the agent moves through the environment completely without violating safety and stability constraints and when obstacle rotates $\pi/4$ counterclockwise in Fig.~\ref{secObsRot}, the agent moves through the feasible path to cover all the environment with similar control gain $K_1$ and $K_2$.
\begin{figure}[t!]
  \centering
  \subfloat[Non-simply connected environment]{\label{osc1}{\includegraphics[width=4.2cm]{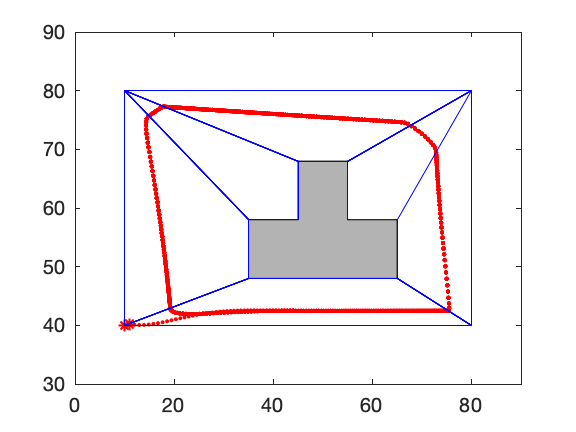} }}%
  \subfloat[Changes of the states versus time]{{\label{osc2}\includegraphics[width=4.2cm]{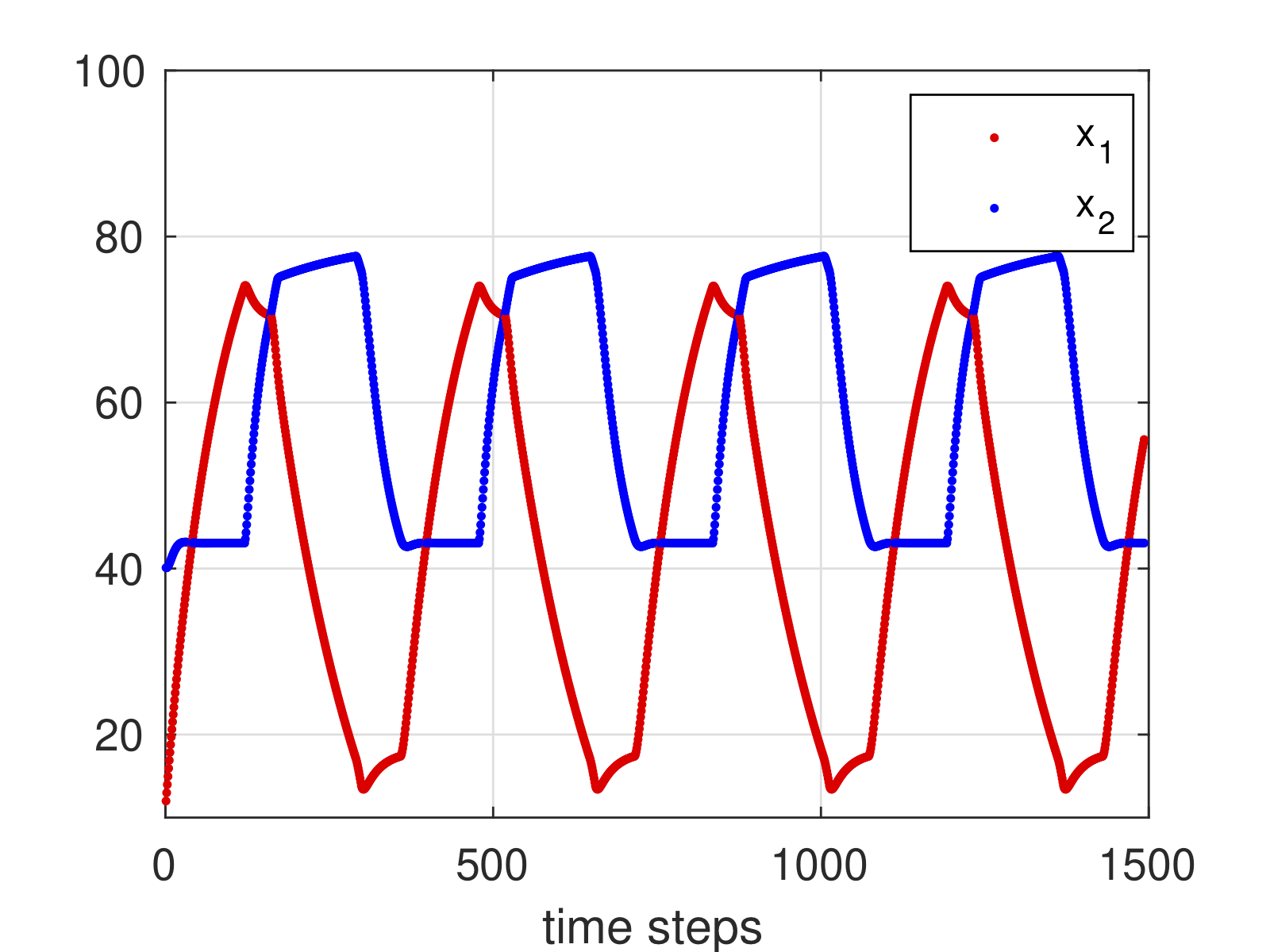} }}%
  % \rtron{The example is nice, but to drive the concept home, you should show the trajectory making several laps of the obstacle.}
  % \rtron{Instead of calling this ``an environment with an obstacle inside'', call it ``a non-simply-connected environment''.}
  \caption{ Fig \ref{osc1} is a non-simply connected environment and the gray polygon is an obstacle. In  Fig.~\ref{osc1} an agent starts from position (10,40) and continuously moves through all sections. In Fig.~\ref{osc2} $x_1$ and $x_2$ variation versus time is shown.}
  \label{secObsOrg}
\end{figure}
\begin{figure}[t]
\centering
  \subfloat[Enlarging the obstacle]{\label{secOdsEng}\includegraphics[width=4.5cm]{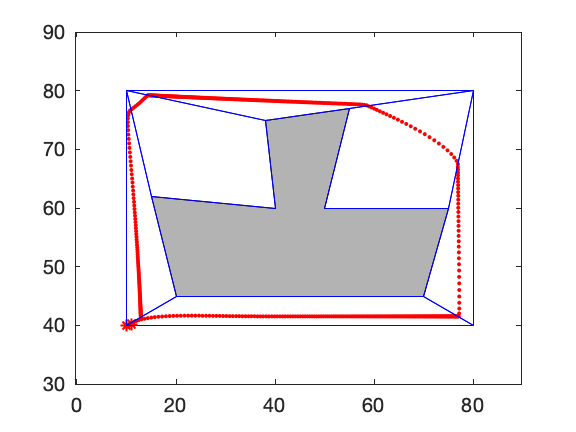} }
  \subfloat[Rotating the obstacle]{{\label{secObsRot}\includegraphics[width=4.5cm]{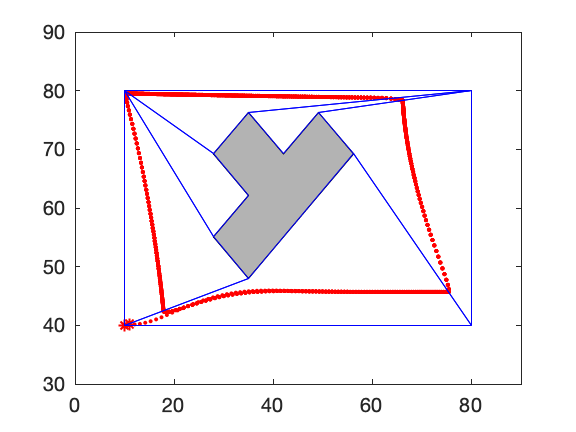} }}%
   \caption{In this two examples we deform the obstacle and apply the second order controller of the original environment to the new environment.}
   
\end{figure}
%\rtron{State which $c_b,c_V$ you picked for the ECBF and ECLF constraints.}
% However the controller is not robust to any deformation for second order systems.
%%%%%%%%%%%%%%%%%%%%%%%%%%%%%%%%%%%%%%%%%%%%%%%%%%%%%%%%%%%%%%%%%%%%%%%%%%%%% 1
\section{CONCLUSIONS}\label{conclusion}
In this paper we proposed a novel approach to design a output-feedback controller with cell decomposition, through Linear Programming. We defined a controller such that it depends on the relative displacement measurements with respect to the landmarks of the convex cells and formed the min-max convex problem. Then we changed the min-max optimization problem to min-min optimization problem by forming the dual of the inner maximization problems and we found the controller which is robust to the significant changes of the environment. We validate our approach on different examples for the first and second order dynamic control systems. As presented, our current cell-focused approach has two limitations: First the controllers are discontinuous at the boundaries of the cells; this can be addressed by adding smoothness constraints between cells (at the price of solving a single large linear program instead of multiple ones). Second, we assume the environment is already discretized in convex cells; it is possible to relax this assumption by using sampling and Voronoi partitions. We plan to study these extensions in our future work. In addition, we aim to implement our method to constrained nonlinear systems based on \cite{girard2008motion}. Moreover, we also plan to formally investigate theoretical guarantees for the robustness of the synthesized controllers that we have empirically demonstrated in this paper. 
%%%%%%%%%%%%%%%%%%%%%%%%%%%%%%%%%%%%%%%%%%%%%%%%%%%%%%%%%%%%%%%%%%%%%%%%%%
% \addtolength{\textheight}{-12cm}
%%%%%%%%%%%%%%%%%%%%%%%%%%%%%%%%%%%%%%%%%%%%%%%%%%%%%%%%%%%%%%%%%%%%%%%%%%%%%%%%
\bibliographystyle{ieee}
\bibliography{references}

\end{document}
%%%%%%%%%%%%%%%%%%%%%%%%%%%%%%%%%%%%%%%%%%%%%%%%%%%%%%%%%%%%%%%
@book{robust_opt,
  title={Robust optimization},
  author={Ben-Tal, Aharon and El Ghaoui, Laurent and Nemirovski, Arkadi},
  volume={28},
  year={2009},
  publisher={Princeton University Press}
}